\documentclass[11pt]{article}

\usepackage[utf8]{inputenc}
\usepackage{microtype}
\usepackage{mathtools,amssymb,mathrsfs,amsthm}   
\usepackage[table,svgnames]{xcolor}
\usepackage[breaklinks,colorlinks=true,linkcolor=black,citecolor=MidnightBlue,urlcolor=MidnightBlue]{hyperref}
\usepackage{xspace}
\usepackage{tikz}
\usepackage{fullpage}
\usepackage{floatpag}

\widowpenalty=8000
\clubpenalty=8000
\linepenalty=500

\usepackage{algorithm}
\usepackage{algorithmicx}
\usepackage{algpseudocode}

\newtheorem{theorem}{Theorem}[section]
\newtheorem{lemma}[theorem]{Lemma}

\newtheorem{corollary}[theorem]{Corollary}
   
\newtheorem{claim}[theorem]{Claim}
\newtheorem{definition}{Definition}%
\newtheorem{remark}{Remark}

\newcommand{\INIT}{\mathsf{INIT}\xspace}
\newcommand{\IDLE}{\mathsf{IDLE}\xspace}
\newcommand{\EXPLORE}{\mathsf{EXPLORE}\xspace}
\newcommand{\DONE}{\mathsf{DONE}\xspace}
\newcommand{\ACTIVE}{\mathsf{ACTIVE}\xspace}

\newcommand{\neigh}{\mathcal{N}}
\newcommand{\T}{\mathcal{T}}

\newcommand{\Explore}{\textsc{Explore}\xspace}
\newcommand{\SetParent}{\textsc{SetParent}\xspace}
\newcommand{\MarkNoChildren}{\textsc{MarkSibling}\xspace}
\newcommand{\Relay}{\textsc{Relay}\xspace}

\newcommand{\CC}{\mathsf{CC}}

\newcommand{\FUS}{fully-utilized synchronous\xspace}

\newenvironment{theorem-repeat}[1]{\begin{trivlist}
\item[\hspace{\labelsep}{\bf\noindent Theorem \ref{#1} }]\em }%
{\end{trivlist}}

\begin{document}

\begin{titlepage}
\title{Making Asynchronous Distributed Computations Robust to Noise}
\author{Keren Censor-Hillel\thanks{Technion, Department of Computer Science, \texttt{ckeren@cs.technion.ac.il}. Supported in part by the Israel Science Foundation (grant 1696/14) and the Binational Science Foundation (grant 2015803).}
\and Ran Gelles\thanks{Faculty of Engineering, Bar-Ilan University, \texttt{ran.gelles@biu.ac.il}.}
\and Bernhard Haeupler\thanks{Carnegie Mellon University,  {\texttt{haeupler@cs.cmu.edu}}. Supported in part by NSF grants CCF-1527110 and CCF-1618280.}
}

\date{}

\maketitle

\begin{abstract}
We consider the problem of making distributed computations robust to noise, in particular to worst-case (adversarial) corruptions of messages. We give a general distributed interactive coding scheme which simulates any asynchronous distributed protocol while tolerating an optimal corruption of a $\Theta(1/n)$ fraction of all messages while incurring a moderate blowup of~$O(n\log^2 n)$ in the communication complexity.

\smallskip

Our result is the first \emph{fully distributed} interactive coding scheme in which the topology of the communication network is not known in advance. 
Prior work required either a coordinating node to be connected to all other nodes in the network or assumed a synchronous network in which all nodes already know the complete topology of the network.
\end{abstract}

\thispagestyle{empty}
\end{titlepage}

\section{Introduction}

Fault tolerance is one of the central challenges in the design of distributed algorithms. Typically, computation is performed by $n$ nodes, of which some subset may be \emph{faulty} and not behave as expected. This includes \emph{crash} or \emph{Byzantine} failures. Faults can also occur as communication errors, if links suffer from, e.g., \emph{omissions}, \emph{alterations} or \emph{Byzantine} errors (see, e.g.,~\cite{Lynch1996, AttiyaW2004}).

We focus on alteration errors, in which the content of sent messages may be corrupted. Previous work in the setting of faulty channels provides fault-tolerant algorithms for several specific tasks, such as the leader election or the consensus  problem (e.g.,~\cite{Sayeed1995,Singh96,gong1998byzantine,SCY98}).

In this paper, we provide a general technique that takes an asynchronous distributed protocol as an input and outputs a simulation of this protocol that is resilient to noise. Specifically, we develop several tools whose combination allows us to obtain the first \emph{fully distributed} interactive coding scheme.

\paragraph{The Challenge.} 
Once communication is unreliable, even the simplest distributed tasks, such as flooding information over the network or constructing a BFS tree, %
become tremendously difficult to execute correctly. 
For instance, the asynchronous distributed Dijkstra or Bellman-Ford algorithms~\cite{peleg00} miserably fail when messages may be corrupted. To see why, recall that in the Bellman-Ford algorithm, each node sends to all of its neighbors its distance from the root. A node then sets its neighbor that is closest to the root as its parent. However, if messages are incorrect, the distance mechanism may fail and nodes may set their parents in an arbitrary way.

\paragraph{Our Contribution.} 
In any attempt to tolerate message corruptions, naturally, some bound on the noise must be given. Indeed, if a majority of the sent messages are corrupted, there is no hope to complete a computation correctly. On the other hand, when the noise falls below a certain threshold, fault tolerant computation can be obtained, for example, by employing various coding techniques.

The field of \emph{coding for interactive communication} (see, e.g., the survey of~\cite{gelles15}) considers the case where two or more parties carry some computation by sending messages to one another over noisy channels
and strives to devise \emph{coding schemes} with good guarantees. 
A coding scheme is a method that is given as an input a protocol~$\pi$ that assumes reliable channels, and outputs a noise-resilient protocol~$\Pi$ that simulates~$\pi$.
The two main measures upon which a coding scheme is evaluated are its \emph{noise resilience}---the fraction of noise that the resilient simulation~$\Pi$ can withstand---and its \emph{overhead}---the amount of redundancy~$\Pi$  adds in order to tolerate faults. For networks with $n$ nodes, it is easy to show that the maximal noise fraction that any resilient protocol can cope with is $\Theta(1/n)$~\cite{JKL15}. Indeed, if more than $(1/n)$-fraction of the messages are corrupted, then the noise can completely corrupt all the communication of the node that sends the least number of messages. 
The overhead depends on the network topology, communication model, and noise resilience, as we elaborate upon in Section~\ref{subsec:related_work}.

Our main result, informally stated as follows, is a deterministic  coding scheme that fortifies any asynchronous protocol designed for a noise-free setting over any network topology, such that its resilient simulation withstands the maximal $\Theta(1/n)$-fraction of noise.
\begin{theorem}\label{thm:main}
There exists a deterministic coding scheme that takes as an input any asynchronous distributed protocol~$\pi$ designed for reliable channels, and outputs an asynchronous distributed protocol~$\Pi$ that simulates~$\pi$,  is resilient to a fraction $\Theta(1/n)$ of adversarially corrupted messages, and has a multiplicative communication overhead of~$O(n\log^2 n)$.
\end{theorem}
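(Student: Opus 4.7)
The plan is to decompose the theorem into two largely orthogonal sub-problems: (i) building a noise-resilient \emph{coordination backbone}---a rooted spanning tree $\tree$---over the unknown topology despite message corruption; and (ii) using $\tree$ to synchronize a simulation of~$\pi$ with short checksum exchanges along tree edges, triggering coordinated rewinds when inconsistencies appear. The final multiplicative overhead $O(n\log^2 n)$ naturally factorizes into an $O(n)$ cost for tree-coordinated synchronization and amortization of rewinds, and an $O(\log^2 n)$ cost for the per-macro-round hashing/verification.

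For step~(i), the goal is a procedure \Explore that sweeps outward from a distinguished root in a DFS-like fashion, with each node moving through the states \INIT, \IDLE, \EXPLORE, \ACTIVE, \DONE. A node in state \EXPLORE probes a candidate neighbor repeatedly, and only after exchanging $\Theta(n)$ confirming messages does it commit via \SetParent; \MarkNoChildren informs already-claimed neighbors that they should not be treated as children, and \Relay propagates completion signals back toward the root. Because the adversary may corrupt only an $\eps/n$ fraction of the total $N$ messages, spending $\Theta(n)$ confirmations per committed tree edge guarantees that no single commitment decision can be flipped; hence the constructed $\tree$ is a correct spanning tree. This phase costs $O(n^2)$ messages globally, which fits inside the overhead budget once the simulation length is comparable to $n/\log^2 n$, and can be absorbed by padding otherwise.

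For step~(ii), once $\tree$ is in place, I would simulate~$\pi$ in macro-rounds, where each macro-round delivers one \emph{logical} round of $\pi$'s messages between adjacent nodes in~$\neigh$. Simultaneously, each node exchanges an $O(\log n)$-bit Schulman-style hash of its local transcript with each tree neighbor; any mismatch anywhere in~$\tree$ is forwarded to the root via \Relay, whereupon the root broadcasts a reset down $\tree$ and all nodes roll back to the last commonly-verified checkpoint. An amortized argument then shows that each corruption induces $O(1)$ additional rewinds and each rewind costs $O(n\log n)$ extra messages; combined with the adversary's budget of $|\Pi|/n$ corruptions, this gives a multiplicative blow-up of $O(n\log^2 n)$.

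The main obstacle---by far---is the noise-resilient construction of $\tree$ in step~(i). Without any coordinator, without topology knowledge, and in the asynchronous model, the adversary can try to impersonate a neighbor announcement, spoof acknowledgments, or concentrate its entire budget on a single node's \EXPLORE attempt to steer it into a bad parent choice or to stall it indefinitely. The confirmation mechanism must simultaneously be \emph{cheap enough} (to stay within the $O(n\log^2 n)$ per-message overhead), \emph{robust enough} (so that no $\Theta(1)$ fraction of an edge's traffic can flip a commitment), and \emph{asynchronous-safe} (so that slow messages are not mistaken for adversarial silence, and vice versa). Getting these three constraints to co-exist---and tying them into a clean amortized accounting against the global $\Theta(1/n)$ noise budget---is where the bulk of the technical work will lie, and is the genuinely new ingredient beyond prior coding schemes that assumed either a central node adjacent to all others or global topology knowledge.
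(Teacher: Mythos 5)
Your step~(i) contains the central gap. You propose to protect each parent-commitment with $\Theta(n)$ confirming messages and argue that the adversary's budget cannot flip any single commitment. But the adversary's budget is a $\Theta(1/n)$ \emph{fraction of all messages of the final protocol}~$\Pi$, and $\CC(\Pi)$ grows with $\CC(\pi)\to\infty$; once the simulation phase is long, the absolute budget $\Theta(\CC(\Pi)/n)$ dwarfs the $\Theta(n)$ confirmations spent on any one edge, so the adversary can concentrate its corruptions and flip a commitment (padding does not help, since padding increases the budget proportionally). The paper avoids this problem by a qualitatively different idea that your proposal lacks: the tree construction is \emph{content-oblivious} --- nodes ignore message contents entirely and act only on the arrival pattern of messages, exploring sequentially so that an \Explore arriving from a non-parent while idle identifies a sibling. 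Since the adversary may alter contents but can neither inject nor delete messages, no amount of corruption affects this construction, so no redundancy or voting is needed at all.

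Step~(ii) also does not close quantitatively, and it departs from what the theorem claims. With resilience $\Theta(1/n)$ the adversary makes $\Theta(\CC(\Pi)/n)$ corruptions; if each induces $O(1)$ rewinds costing $O(n\log n)$ messages, the rewind traffic alone is $\Theta(\CC(\Pi)\log n)\gg \CC(\Pi)$, so your accounting only yields resilience $O(1/(n\log n))$ unless the per-rewind cost is reduced. Moreover, Schulman-style hash exchanges are randomized, whereas the theorem asserts a deterministic scheme against an all-powerful adversary. The paper instead converts $\pi$ into a \FUS protocol over a sparse subgraph, applies the Hoza--Schulman scheme as a black box (constant overhead, resilience $\Theta(1/|E'|)$), and runs it asynchronously through a root-triggered synchronizer; the $O(n\log^2 n)$ overhead comes from routing over an $O(\log n)$-spanner with $O(n)$ edges (itself constructed by first running a cruder tree-based coding scheme, with the two phases length-balanced so neither can be wiped out), together with a priority-based contention-resolution mechanism for flooding over the spanner. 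Your proposal has no analogue of the spanner, the contention resolution, or the phase balancing, and these are precisely where the claimed overhead and resilience are won.
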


\subsection{Techniques}

\paragraph{A Content-Oblivious BFS Construction.}
A key ingredient in our coding scheme is a BFS construction which is \emph{content oblivious}. That is, in our BFS construction, the nodes send messages to each other and \emph{ignore their content}, basing their decisions only on the order of received messages. The challenge is to be able to do this despite asynchrony and despite lack of FIFO assumptions. 
In a sense, our construction can be seen as a variant of the distributed Dijkstra algorithm, with the property that the nodes send ``empty messages'' that contain no information (alternatively, the nodes ignore the content of received messages).

Recall that the distributed Dijkstra algorithm, see, e.g., \cite[Chapter~5]{peleg00}, is initiated by some node~$r$, which governs the BFS construction layer by layer, where the construction of each layer is called a \emph{phase}. 
The invariant is that after the $p$-th phase, the algorithm has constructed a BFS tree~$T_p$ of depth $p$ rooted at $r$, where all nodes in~$T_p$ know their parent and children in~$T_p$. The base case is~$T_0=\{r\}$, and 
the construction of the first layer is as follows. 
The node $r$ sends an \textsc{Explore} message to all its neighbors, who in turn set $r$ as their parent. Each  \textsc{Explore} message is replied to with an  \textsc{Ack} message. Once $r$ receives  \textsc{Ack} messages from all of its neighbors, the first phase ends and the construction of the second layer begins. Note that $T_1$ indeed holds $r$ and all of its neighbors.

For the $p$-th phase, the root floods a message \textsc{Phase} through~$T_{p-1}$. Once a leaf in~$T_{p-1}$ receives a  \textsc{Phase} message, it sends  \textsc{Explore} to all of its neighbors, who in turn set their parent unless already in~$T_{p-1}$. Each node that receives an \textsc{Explore} replies with an \textsc{Ack} and an indication of its parent node, so that the exploring node learns which of its neighbors is a child and which is a sibling. Upon receiving an \textsc{Ack} from all of its neighbors, the node sends an \textsc{Ack} to its parent, which propagates it all the way to~$r$. Once $r$ has received \textsc{Ack} messages from all of its children, the phase is complete.

Our content-oblivious BFS construction imitates the above behavior while using only a ``single'' type of message, instead of \textsc{Phase}, \textsc{Explore} and \textsc{Ack} messages.
Specifically, the construction begins with~$r$ sending a message (\textsc{Explore}\footnote{To ease the readability, we write in parenthesis the functionality of each sent message, but we emphasize that messages in our construction contain no content at all, and the labels of \textsc{Explore} and \textsc{Ack} are given only for the analysis.}) to all of its neighbors, who in turn set $r$ as their parent and reply with a message (\textsc{Ack}). When $r$ receives a message from all of its neighbors, the first phase is complete. Then, $r$ begins the second phase by sending another message (\textsc{Explore/Phase}) to all of its neighbors. This message causes a node that has already set its parent to behave like~$r$---it sends a message to all of its neighbors (\textsc{Explore}) \emph{except for its parent}. After receiving a message (\textsc{Ack}) from all of its neighbors, it sends a message (\textsc{Ack}) to its parent.

One can easily verify that this approach behaves similarly to the Dijkstra algorithm described above, in the sense that every node sets its parent correctly. The only difference is when a node $u$ sends an (\textsc{Explore}) message to its sibling~$w$. In the Dijkstra algorithm the sibling $w$ replies by telling the exploring node $u$ that they are siblings (by indicating the parent of $w$, which is not~$u$). However, in our case messages contain no content
and $u$ is unable to distinguish whether $w$ is a child or a sibling, since in both cases $w$ should reply to the \Explore message in the same way.

Our insight is that serializing each phase provides a solution to the above ambiguity. That is, we let $r$ send a message (\textsc{Explore/Phase}) to one child at a time, waiting to receive a message (\textsc{Ack}) from that child before sending a message (\textsc{Explore/Phase}) to the next child. This gives that if a node is expecting a message (\textsc{Explore}) from its parent but instead it receives a message (\textsc{Explore}) from a non-parent neighbor, then it knows that this neighbor must be a sibling. Hence, the node can mark all siblings and distinguish them from its children.

The main advantage of not basing our construction on the content of received messages
is that the obtained BFS construction
is \emph{inherently tolerant against message corruptions}: the noise has no effect on the construction since the content of the communicated messages is already being ignored. Notice that in our construction, the nodes do not learn their distance from~$r$, in contrast to what can easily be obtained in the noise-free case. However, this will suffice for our usage of the BFS tree in our coding scheme.

\paragraph{Interactive Coding over Sparse Subgraphs.} 
A crucial framework we rely on in our simulation is a multiparty coding scheme for interactive communication by Hoza and Schulman~\cite{HS16}, which is in turn based on ideas from~\cite{RS94}. 
This coding scheme allows simulating protocols over any graph $G=(V,E)$ and withstands an $\Theta(1/|E|)$-fraction of adversarial message corruption, while incurring a \emph{constant} communication overhead. The caveat of using this scheme for our simulation is that it applies only for \emph{synchronous} protocols that communicate over $G$ in a manner which we call \emph{\FUS}: in each round, every node communicates one symbol over to \emph{each} of its neighbors. 

In order to obtain our coding scheme for asynchronous protocol with resilience $\Theta(1/n)$, we first convert the asynchronous input protocol~$\pi$ into a \FUS protocol defined over some subgraph $G'=(V,E')$ of $G$ with $|E'|=\Theta(n)$.
To this end, we use the BFS tree constructed by our content-oblivious method described above. Once we obtain a BFS tree~$\T$, we simulate each message communicated by $\pi$ via $n$ \FUS rounds over the tree~$\T$. During each of such $n$ rounds, a message of~$\pi$ is flooded throughout~$\T$ until it reaches all the nodes and, in particular, its destination node. Note that in every round, all nodes send messages over all the edges of~$\T$. This implies a communication overhead of $O(n^2 \log n)$: we have at most $n$ rounds with  $\Theta(n)$ messages per round. The $\log n$ term stems from adding the identity of the source node and the destination node to each flooded message.\footnote{Throughout this work, all logarithms are taken to base~2.}

Using the Hoza and Schulman~\cite{HS16} coding scheme taking as an input the \FUS protocol defined over the topology~$\T$ gives a resilient simulation of~$\pi$ which withstands a maximal $\Theta(1/n)$-fraction of corrupted messages. Alas, it is a synchronous simulation, while our environment is asynchronous. Hence, to complete our simulation, we need to use a \emph{synchronizer}~\cite{awerbuch85}.

\paragraph{A Root-Triggered Synchronizer.} 
In the original error-free setting, if the input protocol to a synchronizer is guaranteed to be fully-utilized then synchronization is trivial. Each node simply attaches a round number to each of its outgoing messages and produces the outgoing messages for round $i+1$ only after receiving messages for round $i$ from all of its neighbors. The key difficulty is then for non-fully utilized synchronous input algorithms, in which a node cannot simply wait to receive a message for round $i$ from all of its neighbors, as it may be the case that some of these do not exist. 

In our setting, we guarantee that we produce a fully-utilized synchronous algorithm as an input to our synchronizer. 
However, we do not assume FIFO channels, which means that we cannot rely on the naive synchronizer, despite the promise of a \FUS protocol for an input.
Thus, we need a different solution for synchronizing the messages, and our approach is based on having a single node responsible for triggering messages of each round only after the previous round has been simulated by all nodes. To this end, our synchronizer bears similarity to the classic tree-based synchronizer of Awerbuch~\cite{awerbuch85}, with the difference that it does not incur any message overhead because it is given a \FUS input.

\paragraph{A Spanner-Based Coding Scheme.} 
We show that our coding technique described above can be further improved. Routing each message over a tree~$\T$ requires $n$ rounds in the worst case for a message to reach its destination. A more efficient solution would be to route each message through a spanning subgraph $S=(V,E_S)$ of $G$ in which the distance over~$S$ of every $(u,v)\in E$ is not too large. On the other hand, the Hoza-Schulman coding scheme on $S$ has a noise resilience of~$\Theta(1/|E_S|)$, and hence we require $|E_S|$ to be $O(n)$ in order to maintain a maximal resilience level of~$\Theta(1/n)$. Luckily, for every~$G$ there exist  sparse spanning subgraphs in which $|E_S|=O(n)$ while every two neighbors in $G$ are at distance at most $O(\log n)$ in~$S$; such subgraphs are known as $O(\log n)$-spanners~\cite{peleg00,PS89}.

Flooding a message of~$\pi$ from $u$ to $v$ can be done within $O(\log n)$ rounds, in each of which $O(|E_S|)=O(n)$ messages are sent by a \FUS simulation of $\pi$, leading to our claimed communication overhead of $O(n\log^2 n)$. Here again, the extra $\log n$ term stems from adding identifiers to each flooded message.

However, flooding information over a spanner introduces several other difficulties. For instance, in contrast to the case of a tree, it is not guaranteed anymore that each message arrives only once to its destination---indeed, multiple paths may exist between any two nodes. Furthermore, when multiple nodes send messages, the congestion may cause super-polynomial delays if a simple flooding algorithm is used. Then, due to having multiple paths with arbitrary delays, messages may arrive to their destination out of order. Since the delay is super-polynomial in the worst case, adding a counter to each message will increase the overhead by $\omega(\log n)$ and damage the global overhead. 

Instead, we provide a contention-resolution flavored technique, which consists of priority-based windows for delivering the messages. 
In more detail, a message flooding starts only at the beginning of an $O(\log n)$-round window. Multiple messages that are sent during the same window may be dropped during their flooding, yet the source always learns when its message is dropped, so it can retransmit the message in the next window. A similar approach is well-known for constructing a BFS tree when no specific root is given, but our extension of this technique is more involved, since dropped messages \emph{must be resent}.

It remains to explain how to construct the $O(\log n)$-spanner over the noisy network to begin with. For this, we use our previously described tree-based coding scheme 
to simulate a distributed spanner construction, e.g., the (noiseless) construction 
of Derbel,  Mosbah, and Zemmari~\cite{DMZ10}. 
While coding this part incurs a large overhead of~$O(n^2\log n)$,
this overhead applies only to the  part of constructing the spanner, and the global overhead of our coding scheme is still dominated by the overhead of coding the input protocol over the spanner.

\subsection{Related Work}
\label{subsec:related_work}
Performing computations over noisy channels is the heart of \emph{coding for interactive communication}, initiated by Schulman~\cite{Schulman92,schulman96}. A long line of work considers the 2-party case and obtains various coding schemes, as well as bounds on their capabilities in various settings and noise models~\cite{BR14,BE14,BKN14,GHS14,FGOS15,EGH16,KR13,Haeupler14,BGMO16,GHKRW16}. See~\cite{gelles15} for a survey on interactive coding. 

Interactive coding in the multiparty setting was first considered by Rajagopalan and Schulman~\cite{RS94} for the case of random noise, where every bit is flipped with some fixed probability.
Rajagopalan and Schulman show, for any topology~$G$, a coding scheme with an overhead of $O(\log (d+1))$, where $d$ is the maximal degree of~$G$. 
Gelles, Moitra and Sahai~\cite{GMS14} provide an efficient extension to that scheme. Alon et al.~\cite{ABEGH16} show a coding scheme with an overhead of $O(1)$ for $d$-regular graphs with degree $d=n^{\Omega(1)}$. Braverman et al.~\cite{BEGH16} demonstrate a lower bound of $\Omega(\log n)$ on the communication over a star graph. All the above works assume
 \FUS protocols, in which the protocol works in rounds and in every round all nodes communicate on all the channels connected to them. Gelles and Kalai~\cite{GK17} show that if nodes are not required to speak at every round, a lower bound of $\Omega(\log n)$ on the overhead can be proved even for coding schemes over graphs with small degree, e.g., $d=2$.
 
In the case of adversarial noise, Jain, Kalai and Lewko~\cite{JKL15} show a coding scheme that is resilient to a noise fraction of~$\Theta(1/n)$ and has an overhead of~$O(1)$ in networks which contain a star as a subgraph. Lewko and Vitercik~\cite{LV15} improve the communication balance of that scheme. Hoza and Schulman~\cite{HS16} consider \FUS protocols on arbitrary graphs and show a coding with resilience $\Theta(1/|E|)$ and overhead~$O(1)$. If the topology of the network~$G$ is known to all nodes, 
the nodes can route messages over a sparser spanning graph and decrease the number of edges used by the coding scheme. In this case, Hoza and Schulman show a coding scheme with a maximal resilience level of~$\Theta(1/n)$ and an overhead of~$O((|E|\log n) / n)$.

\medskip

Previous work in distributed settings that allow edge failures are
typically different from our setting in various aspects. Most notable are
synchrony assumptions, complete communication graphs, addressing specific
distributed tasks and assuming a bound on the number of links that may
exhibit failures. This is in contrast to our work, which addresses an
asynchronous setting with an arbitrary topology, and considers the
simulation of any distributed task.
In particular, all links may send corrupted messages, with the bound being
the number of corruptions rather than the number of faulty links.
For instance, Singh~\cite{Singh96} and Sayeed, Abu-Amara and Abu-Amara~\cite{Sayeed1995} consider the specific task of leader election and agreement for complete networks. 
Gong, Lincoln, and Rushby~\cite{gong1998byzantine},
Siu, Chin and
Yang~\cite{SCY98} and Dasgupta~\cite{Dasgupta98} 
consider agreement in complete synchronous networks with both faulty nodes and faulty links.

Pelc~\cite{Pelc92} shows that if the number of Byzantine-corrupted links is bounded by~$t$, reliable communication is achievable only over graphs whose connectivity is more than~$2t$. The same work also considers the case where each link is faulty with some probability.
In a more recent work, Feinerman, Haeupler and Korman~\cite{FeinermanHK14}
also address complete synchronous networks, and study the specific
problems of broadcast and majority consensus under random errors.

Synchronizers for unreliable settings have been studied
by Awerbuch et al.~\cite{AwerbuchPPS92}, which address a dynamic setting, and
by Harrington and Somani~\cite{HarringtonS94}, which assume faulty nodes.

\subsection{Organization}
We define basic notations, our communication and noise model as well as the notion of  noise resiliet computations (i.e., coding schemes) in Section~\ref{sec:prelim}.
In Section~\ref{sec:BFS} we describe our content-oblivious BFS construction.
A coding scheme over a spanning tree with overhead $O(n^2\log n)$ is provided in Section~\ref{sec:coding}. 
Finally, a coding scheme based on an underlying spanner with an improved overhead of~$O(n\log^2 n)$ is provided in Section~\ref{sec:coding-spanner}.

\section{Preliminaries}\label{sec:prelim}

Throughout this work we assume a network described by a graph $G=(V,E)$ with $n=|V|$ nodes and $m=|E|$ edges. Each node $u\in V$ is a party that participates in the computation and each edge $(u,v)\in E$ is a bi-directional communication channel between nodes $u$ and~$v$. 
The task of the nodes is to conduct some distributed computation given by a deterministic\footnote{While we focus here on deterministic protocols, ours result also apply to randomized Monte-Carlo protocols.} protocol~$\pi$, which consists of the algorithm each node (locally) runs. In particular, the protocol dictates to each node which messages to send to which neighbor as a function of all previous communication (and possibly the node's identity, private randomness and private input, if exists). The \emph{communication complexity} of the protocol, $\CC(\pi)$, is the maximal number of bits communicated by all nodes in any instance of~$\pi$. The \emph{message complexity} of $\pi$ is the maximal number of message sent by all nodes in any instance of~$\pi$.

We assume that the topology of $G$ is known only locally, namely, each node $v$ knows only the set $\neigh_u$ of identities of its own neighbors. However, the size of the network $n$ is known to all nodes.

\paragraph{Communication Models.}
Our protocols are for the \emph{Asynchronous} communication model defined below. In addition, we describe a different communication model named the \emph{Fully-Utilized Synchronous Model}, which is common in previous interactive coding work~\cite{RS94,HS16,ABEGH16,BEGH16}. In particular, we use coding schemes defined in the \FUS model (specifically, \cite{HS16}) as primitives for encoding our asynchronous protocols (see Lemma~\ref{lem:HS} below).
\begin{itemize}
\item \emph{Asynchronous Model}.
	In this setting, there are no timing assumptions. We assume each node is asleep until receiving a message. Once a message is received, the receiver wakes up, performs some local computation, transmits one or more messages to one or more adjacent nodes and goes back to sleep. Messages can be of any length. A protocol starts by waking up a single node $r$ of its choice.
\item \emph{The Fully-Utilized Synchronous Model}.
	Communication in this model works in synchronous \emph{rounds}, determined by a global clock. At every clock tick, every node sends one symbol (from some fixed alphabet~$\Sigma$) on each and every one of the communication links connected to it. That is, at every round  exactly $2m$~symbols are being communicated. 
\end{itemize}

\paragraph{Adversary.}
We assume an all-powerful adversary that knows the network~$G$, the protocol~$\pi$ and the private inputs of the nodes (if there are any). The adversary is able to (a) corrupt messages by changing the content of a transmitted message and (b) rush or delay the delivery of messages by an unbounded but finite amount of time. We restrict the number of messages that the adversary can corrupt, namely, we assume that the adversary can corrupt at most some fixed fraction~$\mu$ of the communicated messages. We do not restrict how a message can be corrupted and, in particular, the adversary may replace a sent message $M$ with any other message $M'$ of any length and content. However, our coding scheme will have the invariant that each message contains a single symbol (from a given alphabet~$\Sigma$), thus a message corruption will be equivalent to corrupting a single symbol. Note that the adversary is \emph{not} allowed to inject new messages or completely delete existing messages.\footnote{This type of noise, commonly called \emph{insertion and deletion} noise, is known to be more difficult to deal with in the interactive setting~\cite{BGMO16} and may be destructive for asynchronous protocols~\cite{FischerLP85}. }

\paragraph{Protocol Simulation, Resilience, and Overhead.}
A protocol~$\Pi$ is said to \emph{simulate}~$\pi$, if after the completion of~$\Pi$, each node outputs the transcript it would have seen when running~$\pi$ assuming noiseless channels. 
The protocol~$\Pi$ is \emph{resilient} to a $\mu$ fraction of noise, if $\Pi$ succeeds in simulating~$\pi$ even if an all powerful adversary completely corrupts up to a fraction~$\mu$ of the messages communicated by~$\Pi$. 
The \emph{overhead} of~$\Pi$ with respect to~$\pi$ is defined by
\(
\textit{overhead}(\Pi \mid \pi)=\CC(\Pi)/\CC(\pi).
\)

A coding scheme $\mathcal{C}:\pi \to \Pi$ converts any input protocol~$\pi$ into a resilient version $\Pi=\mathcal{C}(\pi)$. The resilience of a coding scheme is the minimal resilience of any simulation generated by the coding scheme. The (asymptotic) overhead of a coding scheme considers the maximal overhead for the worst input protocol~$\pi$ 
when $\CC(\pi)$ tends to infinity. Namely,
\[
\textit{overhead}(\mathcal{C}) =   \limsup_{c\to \infty} \max_{\substack{\pi \text{ s.t.} \\  \CC(\pi)=c}} \textit{overhead}(\mathcal{C}(\pi) \mid \pi).
\]
We are mainly interested in how the overhead scales with $n$ and $m$.

\medskip 
A famous multiparty coding scheme in the \FUS model, shown by Hoza and Schulman~\cite{HS16} (based on a previous scheme~\cite{RS94}), provides a coding scheme that simulates any noiseless \FUS protocol~$\pi$ defined over some topology~$G$ with resilience $\Theta(1/m)$ and a constant overhead~$O(1)$.
\begin{lemma}[\cite{HS16}]\label{lem:HS}
In the fully-utilized synchronous model, any $T$-round protocol~$\pi$ can be simulated by a  protocol $\Pi=\textup{HS}(\pi)$ with round complexity~$O(T)$ and communication complexity~$O(\CC(\pi))$ that is resilient to   adversarial corruption of up to an  $\Theta(1/m)$~fraction of the messages.
\end{lemma}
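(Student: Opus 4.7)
The plan is to follow the tree-code based approach of Rajagopalan-Schulman, extended to adversarial multiparty noise by Hoza-Schulman. I would equip each of the $2m$ directed edges of $G$ with its own independent asymptotically good tree code---constant rate, constant relative distance, constant-size alphabet $\Sigma$. The simulation $\Pi$ then runs for $T' = O(T)$ synchronous rounds, and in each round every node transmits one symbol of $\Sigma$ along every incident directed edge. This immediately gives round complexity $O(T)$ and total communication $2m T' = O(\CC(\pi))$, so that $\Pi$ is itself a fully-utilized synchronous protocol of the required size.

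Along each directed edge $(u,v)$ in each round, node $u$ sends the next symbol of the tree-code encoding of its current best guess of the $\pi$-transcript that it should have sent along $(u,v)$. Node $v$ tree-decodes the stream received from $u$ to maintain an estimate of this intended transcript. From the estimates on all incoming edges together with its own simulation state, $v$ advances the simulation of $\pi$ by one $\pi$-round; whenever new decoding reveals that a previously assumed prefix has changed, $v$ rewinds to the longest jointly consistent prefix. The analysis uses a potential $\Phi$ equal to the sum over all directed edges of the distance between the receiver's decoded view of the sender's transcript and the true transcript, plus a term accounting for simulation drift between adjacent nodes. By the tree-code distance property, each adversarially corrupted symbol increases $\Phi$ by at most $O(1)$, while each uncorrupted round in which the simulation has not yet completed decreases $\Phi$ by $\Omega(1)$ amortized. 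Choosing the noise fraction below a sufficiently small constant $c/m$ ensures that the adversary's total contribution over the $\Theta(Tm)$ transmitted symbols is $O(T)$, which is strictly dominated by the progress made in $T' = O(T)$ rounds, so all nodes correctly reconstruct the full $T$-round transcript of $\pi$.

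The main obstacle is the tension between the global $\Theta(1/m)$ corruption budget and per-edge recovery. Any fixed edge sees only $O(T)$ symbols, so the adversary could in principle concentrate its entire budget on one edge and corrupt a constant fraction of its transmissions; this is precisely the regime the asymptotically good tree codes are chosen to tolerate, which is why the per-edge decoding still succeeds. Ruling out more devious adversaries that spread corruptions so as to cascade local inconsistencies into a global simulation failure is handled by charging each increase of $\Phi$ to the specific corruption that caused it and then bounding the total corruption budget globally, which is the technical heart of the Hoza-Schulman argument.
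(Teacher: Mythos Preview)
The paper does not give its own proof of this lemma: it is stated with the citation~\cite{HS16} and used as a black box throughout. There is therefore nothing in the paper to compare your argument against. Your sketch is a reasonable high-level outline of the Hoza--Schulman construction and its analysis, and the instinct to identify the ``concentrated adversary'' as the key obstacle is correct, but for the purposes of this paper a proof is neither expected nor provided; the lemma is simply imported from prior work.
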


\section{A Distributed Content-Oblivious BFS Algorithm}
\label{sec:BFS}

In this section we show a distributed construction of a BFS tree using messages whose content can be arbitrary. We call this a \emph{content-oblivious} construction. Our algorithm can be seen as a variant of a simple distributed layered-BFS algorithm, see, e.g.,~\cite{gallager82,peleg00,tel00}.

\subsection{The BFS Algorithm: Description}
\label{sec:BFS-const}

The BFS construction is initiated by one designated node $r$ we call here the \emph{root}. 
The construction builds the tree layer by layer. First, the root sends a message to all of its neighbors.
This triggers its neighbors to set~$r$ as their parent. Each such a neighbor replies a message to~$r$ to acknowledge that it has received $r$'s message. Once $r$ has received a message from all of its neighbors, it knows that the first layer is completed, and all nodes with distance 1 have set $r$ as their parent. We call the above an \Explore step.

The root then begins a second \Explore which causes all nodes at distance $2$ to set their parent and connect to the BFS tree. Specifically, the root sends a message to each of its children and waits until all children reply a message to indicate they are done. However, in contrast to previous distributed BFS algorithms, messages are sent \emph{sequentially}---the root sends a message to its next child only after receiving the acknowledgement message from its previous child.

When a node $v$ that has already set its parent $\textit{parent}_v$ receives a message \emph{from its parent}, it acts as a root and invokes an \Explore: it sends a message to all of its neighbors excluding $\textit{parent}_v$ and waits until they all send  a message back. Only then $v$ sends a message to its parent to indicate its \Explore process has completed. It is easy to see that when the root completes its $k$-th \Explore, all nodes within distance at most $k$ have set their parent and connected to the BFS tree.

A special treatment is needed when a node $u$ receives a message from a node $v$ who is \emph{not} the parent of~$u$ during a time at which $u$ is not in the middle of an \Explore step. That is, $u$ is not expecting any messages from its neighbors, except for its parent that may trigger it to initiate another \Explore step. Recalling that messages are sent to children in a sequential manner, it is easy to verify that such a message delivery may happen only when $v$ has received a message from its own parent and is now processing its own \Explore. That is, such a message indicates that $v$ is \emph{a sibling} of $u$ in the BFS tree (namely, $v$ is not a parent nor a child of $u$ in the BFS tree).  Thus, upon receiving such a message, $u$ marks $v$ as a sibling and removes it from its list of children. To simplify the presentation, as we elaborate in Remark~\ref{remark:explore-to-sibling}, in next exploration steps $u$ will keep sending messages to $v$ as if it was one of its children.

\smallskip 

One additional property that we require from our BFS construction is that all the nodes complete the algorithm \emph{at the same time}. As explained in the introduction, we use this construction as an initial part for our coding scheme. Furthermore, recall that in order to be noise-resilient, during the BFS construction the nodes ignore the content of the messages and their entire behavior is based on whether or not a message was received. However, once this construction is complete, the nodes send and receive messages according to the coding scheme and it is crucial that a node is able to distinguish messages that belong to the BFS construction from messages of the coding scheme.  

We solve this issue by making sure that each node participates in exactly $n$ steps of \Explore. Once the node has sent the $n$-th acknowledgement to its parent, the node knows that the next message \emph{from the parent} belongs to the coding scheme rather than to the BFS construction.\footnote{Note that additional messages may arrive from a sibling node for the BFS construction but still, the next message arriving from the \emph{parent} belongs to the coding scheme rather than the BFS construction.}
To make sure that each node participates in exactly $n$ \Explore steps, regardless of its distance from~$r$, we let every node initiate one additional \Explore, which we refer to as a \emph{dummy} \Explore. Specifically, when a node completes its $(n-1)$-th \Explore, and \emph{before the node sends the acknowledgement back to its parent}, it invokes another \Explore step. Now, just by counting the messages received from the parent, every node knows whether the BFS construction has completed or not.

The pseudocode of the BFS construction is given in Algorithm~\ref{alg:BFS} and Algorithm~\ref{alg:BFS-procedures}.

\renewcommand\thealgorithm{\arabic{algorithm}(a)}   %

\begin{algorithm}[htp]
\caption{Content-oblivious BFS construction: Main Algorithm}
\label{alg:BFS}
\begin{algorithmic}[1]

\Statex
\textbf{Initialization:}
All nodes begin in the $\INIT$ state. %

\Statex

\State For node $r$ designated as root:

\algblock[Name]{Begin}{End}
\Begin
	\State $\textit{parent}_r \gets \bot$
	\State $\textit{children}_r \gets  \neigh_r$
	\State $\textit{count}_r \gets  0$ 
	\State $\textit{state}_r \gets \IDLE$
	
	\Statex 
	\While {$state_r \ne \DONE$}  \Comment Perform $n$ instances of \Call{Explore}{}
		\State $r$ invokes \Call{Explore}{} 
	\EndWhile
\End
\end{algorithmic}
\end{algorithm}

\renewcommand\thealgorithm{\the\numexpr\value{algorithm}-1(b)}   %

\begin{algorithm}[htp]
\caption{Content-oblivious BFS construction: Message Handling Procedures}
\label{alg:BFS-procedures}
\begin{algorithmic}[1]
\Statex
\Statex For every node $u$ in state $\INIT$ upon receiving a message from node $v$
\Procedure {SetParent}{}
	\State $\textit{parent}_u \gets v$
	\State $\textit{children}_u \gets  \neigh_u \setminus \{v\}$
	\State $\textit{count}_u \gets  0$ 
	\State $\textit{state}_u \gets \IDLE$
	\State send a message to $v$ \Comment{an ``ACK'' message}
	
\EndProcedure

\Statex
\Statex For every node $u$ in state $\IDLE$/$\DONE$ upon receiving a message from $v\ne \textit{parent}_u$

\Procedure {\MarkNoChildren}{}
	\State $\textit{children}_u \gets \textit{children}_u \setminus \{v\}$
	\State send a message to $v$ \Comment{an ``ACK'' message}
\EndProcedure

\Statex

\Statex For every node $u$ in state $\IDLE$ upon receiving a message from $\textit{parent}_u$
\Procedure {Explore}{}
	\State $\textit{state}_u \gets \EXPLORE$
	\State $\textit{count}_u \gets  \textit{count}_u + 1$ 
	\ForAll {$v \in \neigh_u \setminus \{\textit{parent}_u\}$} \Comment{ note: \textbf{for} is sequential}
		\State send a message to $v$ \Comment{an ``Explore'' message}
			 \label{line:ExploreChild}
		\State wait until a message is received from $v$
			\label{line:WaitForAck}
	\EndFor
	
	\Statex 
	\If {$\textit{count}_u = n-1$} \label{line:dummy}
	\Comment{Extra \emph{dummy} \Explore}
		\ForAll {$v \in \textit{children}_u$} 
		\State send a message to $v$
		\State wait until a message is received from $v$
		\EndFor \label{line:dummyEnd}
	\EndIf
	
	\Statex
	\State send a message to $\textit{parent}_u$ \Comment{an ``ACK'' message}
	\Statex
	\If  {$\textit{count}_u = n-1$}   \Comment{Change state to $\DONE$ if completed; otherwise, back to $\IDLE$}
		\State $\textit{state}_u \gets \DONE$
	\Else	
		\State $\textit{state}_u \gets \IDLE$
	\EndIf
	\label{line:ACK}
	
\EndProcedure

\end{algorithmic}
\end{algorithm}

\renewcommand\thealgorithm{\the\numexpr\value{algorithm}-1}

\subsection{The BFS Algorithm: Analysis}
\label{sec:BFS-analysis}

In this section we analyze Algorithm~1 %
and show that it satisfies the following properties.
\begin{theorem}
\label{thm:BFS}
For any input $G=(V,E)$ and node $r\in V$, Algorithm~1 %
 finds a BFS tree $\mathcal{T}$ with root $r$.
Specifically, each node knows its parent in $T$ and all of its adjacent edges that belong to $\mathcal{T}$. The algorithm communicates $O(nm)$ %
messages, where no payload is needed in any messages.  
\end{theorem}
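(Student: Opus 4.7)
The plan is to analyze Algorithm~1 by induction on the \Explore rounds initiated by~$r$, exploiting a serialization property that follows from the sequential for-loops and blocking waits inside the \Explore procedure.

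I would first prove a \emph{serialization invariant}: at any instant during a single \Explore round of~$r$, the set of nodes currently in state \EXPLORE forms a directed path in the partially built tree beginning at~$r$ and ending at a single ``tip'' node, while every other node is in \INIT, \IDLE, or \DONE. This holds by structural induction on the depth of recursive \Explore invocations---$r$ transmits to its non-parent neighbors sequentially (blocking for a reply before proceeding), each triggered child does the same, and a node exits \EXPLORE only after its entire subtree recursion has returned. A crucial consequence is that two disjoint branches of the tree cannot be simultaneously active, even though the adversary may arbitrarily interleave message deliveries from distinct neighbors.

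I would then prove by induction on~$d$ the statement: \emph{after the $d$-th \Explore round of~$r$, every node at distance at most~$d$ from~$r$ has set its parent to a neighbor at distance exactly one less and has returned to state \IDLE.} The base case $d=0$ is immediate. For the step, consider a node $v$ at distance exactly~$d$: the hypothesis tells us $v$ is still in \INIT at the start of round~$d$, and only nodes at distance $< d$ actively transmit this round (the distance-${\ge}\,d$ nodes are in \INIT and only transmit once triggered by their own parents, which does not happen in round~$d$). Hence the first message to reach~$v$ comes from a distance-$(d-1)$ neighbor and sets its parent via \SetParent, while any subsequent messages are handled by \MarkNoChildren; all active branches eventually propagate acknowledgements back to~$r$.

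For the children information, I would argue that for every non-tree edge $(u,v)$ with $\mathrm{dist}(u) \le \mathrm{dist}(v)$, both endpoints remove each other from their children lists. Consider the deeper endpoint~$v$ during its own first \Explore, which occurs in round $\mathrm{dist}(v)+1$: when $v$ transmits to~$u$ in its for-loop, $u$ cannot be in \EXPLORE---the serialization invariant would place both $u$ and $v$ on the active chain, forcing one to be an ancestor of the other in the tree, but since adjacency in~$G$ constrains their depths to differ by at most one, any such ancestry would be an immediate parent-child relation, contradicting the non-tree assumption. Nor is $u$ in \INIT, since $\mathrm{dist}(u) \le \mathrm{dist}(v)$ implies $u$ entered the tree no later than~$v$. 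Therefore $u$ is in \IDLE or \DONE and executes \MarkNoChildren, removing $v$ from $\textit{children}_u$; the symmetric argument removes $u$ from $\textit{children}_v$. Because the BFS depth is at most $n-1$, the algorithm's $n$ \Explore rounds suffice to exercise every non-tree edge in both directions, while the dummy \Explore at the $(n-1)$-th invocation ensures every node transitions to \DONE after exactly $n$ \Explore events regardless of its depth.

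Finally, for the message count, each \Explore round transmits along each edge a constant number of times (a send-and-reply pair contributed by each endpoint whose for-loop traverses the edge, plus constantly many extra messages from the dummy phase), giving $O(m)$ messages per round and $O(nm)$ in total; the messages carry no payload because the entire algorithm is driven solely by the sender's identity and the receiver's local state. The main obstacle I anticipate is formalizing the serialization invariant in the asynchronous non-FIFO setting: I would need to verify that the wait-for-specific-sender semantics of the \Explore procedure, together with the root's sequential dispatch, cannot be subverted by the adversary's message reorderings, and that out-of-order arrivals from other neighbors are harmlessly buffered until the corresponding for-loop iteration consumes them.
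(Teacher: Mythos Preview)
Your proposal is correct and follows essentially the same approach as the paper: induction on the root's \Explore rounds to establish that distance-$d$ nodes set their parent to a distance-$(d{-}1)$ neighbor, followed by an argument that every non-tree edge triggers \MarkNoChildren on both endpoints, and finally the $O(nm)$ message count. The one notable difference is that you make the \emph{serialization invariant} (the \EXPLORE nodes form a root-to-tip chain) explicit, whereas the paper invokes it only implicitly when asserting that ``when $v$ invokes \Explore then $u$ is in state $\IDLE$'' for a non-ancestor neighbor~$u$; your formulation is a cleaner justification of exactly that step, and it also dissolves your own stated worry about out-of-order arrivals, since the invariant itself guarantees that no unsolicited message can reach a node in state \EXPLORE.
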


Furthermore, we show that all nodes know that the BFS construction is complete, in the following sense.
\begin{claim}
\label{clm:order}
At the end of Algorithm~1 %
all nodes are in state $\DONE$. 
Moreover, if $r$ is in state $\DONE$ then all other nodes are in state $\DONE$ as well.
\end{claim}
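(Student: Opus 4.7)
The plan is to first reduce the two parts of the claim to a single statement: if the root $r$ ever reaches state $\DONE$, then every other node is already in $\DONE$. This implies the ``end of algorithm'' part, since the while loop at $r$ exits precisely when $\textit{state}_r = \DONE$. The whole proof then rests on the blocking, sequential wait built into the regular and dummy loops of \Explore (line~\ref{line:WaitForAck} and lines~\ref{line:dummy}--\ref{line:dummyEnd}): $r$ cannot complete its $(n-1)$-th \Explore until all its children have acknowledged, each of which in turn waits for its own children, and so on down the BFS tree.

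The first step is a counting lemma, proved by induction on depth: every node at depth $d$ in the BFS tree receives exactly $n$ messages from its parent over the entire execution, and correspondingly participates in \Explore calls whose $\textit{count}$ takes the values $1, 2, \ldots, n-1$. For the base $d=0$, the root's while loop performs $n-1$ \Explore calls. For the inductive step, assuming the parent $p$ at depth $d-1$ runs \Explore calls with counts $1, \ldots, n-1$, each such call sends one message to each non-parent neighbor in its regular loop, and the count-$(n{-}1)$ call sends one additional message in the dummy loop on lines~\ref{line:dummy}--\ref{line:dummyEnd}. This yields $(n-1)+1=n$ messages from $p$ to each child $u$. The first such message finds $u$ in state $\INIT$ and invokes \SetParent, setting $\textit{count}_u \gets 0$; the remaining $n-1$ messages come from $\textit{parent}_u$ while $u$ is $\IDLE$ and trigger \Explore, incrementing $\textit{count}_u$ each time. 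Hence $\textit{count}_u$ reaches $n-1$ and $u$ itself runs a dummy loop, as required to close the induction.

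The second step is a timing invariant, proved by induction on the height of $u$ in the BFS tree: when the \Explore at $u$ with $\textit{count}_u = n-1$ returns (and $u$ sets $\textit{state}_u \gets \DONE$), every descendant of $u$ is already in $\DONE$. The base case (leaves) is vacuous. For the inductive step, the dummy loop at $u$ sends a message to each child $v$ and blocks until $v$ replies. By the counting lemma, this dummy message is the $n$-th (and last) message $v$ ever receives from $u$, so the \Explore triggered at $v$ also runs with $\textit{count}_v = n-1$ and executes its own dummy loop. The inductive hypothesis applied to the subtree rooted at $v$ gives that when $v$'s call returns and $v$ acknowledges $u$, all descendants of $v$ are already $\DONE$. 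Sequentiality of $u$'s dummy loop then guarantees that once $u$ transitions to $\DONE$, all descendants of $u$ are $\DONE$. Specializing this to $u=r$ yields the ``moreover'' part of the claim.

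The step I expect to need the most care is justifying the counting lemma in the asynchronous setting, where messages from siblings---handled by \MarkNoChildren---and messages that happen to arrive while $u$ is in state $\EXPLORE$ may be interleaved with parent-triggered \Explore calls. In particular, I need to argue that the per-sender ``wait until a message is received from $v$'' semantics on line~\ref{line:WaitForAck} cannot mistakenly absorb a stray sibling message as the awaited ACK, and that messages arriving from non-parents during $u$'s $\EXPLORE$ state are deferred and eventually processed by \MarkNoChildren rather than by \Explore or \SetParent. Once this buffering invariant is pinned down, the number of parent-to-child messages, together with their classification as \SetParent vs.\ \Explore triggers, is exactly as claimed, and the recursive dummy cascade completes the proof.
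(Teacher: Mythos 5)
Your proof is correct and takes essentially the same route as the paper: you establish via the dummy step that every node receives exactly $n$ messages from its parent (equivalently, runs \Explore with counts $1,\dots,n-1$, the last one nested inside its parent's last \Explore), and then use the blocking per-child waits to propagate $\DONE$ bottom-up so that the root is last. The low-level asynchrony issues you flag (stray sibling messages and deferral of messages arriving during $\EXPLORE$) are real but are treated at the same informal level in the paper's own argument.
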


\begin{proof}(\textbf{Theorem~\ref{thm:BFS}})\quad
Let $\T$ be a graph on the nodes~$V$ defined at the end of Algorithm~1 %
in the following manner: If $v=\textit{parent}_u$, then $(u,v)$ is an edge in $\T$. 
We begin by proving that $\T$ is a spanning tree. This is implied by the following claim.
\begin{claim}
At the end of the $k$-th invocation of the root's \Explore step, all the nodes that are at distance $k$ from $r$ set their parent to a node with distance $k-1$ from~$r$ and move to the state $\IDLE$, and every node of distance larger than~$k$ from~$r$ is in state~$\INIT$.
\end{claim}
\begin{proof}
We prove the claim by induction on $k$.
The base case $k=1$ follows since in the first \Explore invocation all of $r$'s children run \SetParent, setting $r$ as their parent, and switch to $\IDLE$. They send message only back to $r$, hence all other nodes remain in $\INIT$.
 
Assume that the claim holds for the $k$-th invocation and consider the $(k+1)$-th invocation of \Explore by~$r$. 
Messages propagating along the BFS tree cause all nodes of distance at most $k$ to invoke \Explore (in some order). 
This triggers a message to every node of distance $k+1$, which causes it to switch its state to $\IDLE$ and set its parent to the first node (of distance $k$) that sent it a message. Note that nodes of distance $k+1$ only communicate back to their parent and do not invoke \Explore at this time, so nodes of distance larger than $k+1$ remain in state $\INIT$. At the end of the invocation each \Explore, the invoking node switches back to state $\IDLE$. 
\end{proof}

Next, we prove that each node learns which neighbors are its children and which are not.
First note that if $v=\textit{parent}_u$ then $u$ only sends $v$  a message as a reply to a prior message received from~$v$ (i.e., an ``ACK'' message at the end of an \Explore). Therefore, whenever $v$  receives a a message from~$u$ it is in state $\EXPLORE$, and such a message can never invoke the procedure \MarkNoChildren. It follows that at the end of the algorithm $v \in \textit{children}_u$.

Next, assume $(u,v)$ is an edge in~$G$ but not in~$\T$. We show that at the end of the algorithm $v\notin \textit{children}_u$ and $u\notin \textit{children}_v$.
Let $t$ be the first time after which both $u$ and $v$ have invoked \SetParent. 
We claim that both $u$ and $v$ invoke \Explore after time $t$. This is because time $t$ is within the execution of an \Explore step invoked by $r$ and before Line~\ref{line:dummy} of that execution, and hence for every node $w\neq r$ there is a time $t_w>t$ during the execution of the loop in Lines~\ref{line:dummy}--\ref{line:dummyEnd} for $r$ in which $w$ invokes \Explore.

Finally, we note that since $(u,v)$ is an edge in $G$ but not in $\T$, then neither $u$ is an ancestor of $v$ in $\T$ nor $v$ is an ancestor of $u$ in $\T$. This implies that when $v$ invokes \Explore then $u$ is in state $\IDLE$, which causes it to invoke \MarkNoChildren and hence  $v\notin \textit{children}_u$. The proof for $u\notin \textit{children}_v$ is exactly the same.

Finally let us analyze the message complexity.
In Algorithm~1 %
each node invokes \Explore for $n$ times (see also the proof of Claim~\ref{clm:order} below), where during each \Explore it sends a message on each edge. Therefore, there are $O(n)$ messages sent on each one of the $m$ edges, which amounts to a total message complexity of $O(nm) = O(|V|\cdot |E|)$.
\end{proof}

\begin{remark} 
\label{remark:explore-to-sibling}
It is possible to reduce the message complexity by sending \Explore messages only to $\textit{children}_v$ nodes. However, this must be delayed at least one \Explore step, beyond the point in time where all the neighbors have completed their first \Explore (in order to be able to identify siblings). 
The new message complexity will be $O(|V|^2+|E|)$. For simplicity, we avoid this optimization and assume \Explore messages are sent to all non-parent nodes all the time, incurring a message complexity of $O(|V|\cdot|E|)$.
\end{remark}

We now prove Claim~\ref{clm:order}. This property is important in particular for the next section, as it suggests that there is a point in time (known by the root), when all nodes have completed their BFS algorithm. In hindsight, this allows to distinguish messages that are part of the BFS construction, whose content is ignored, from messages of the coding scheme, whose content is meaningful and must not be ignored. 

\begin{proof}(\textbf{Claim~\ref{clm:order}})
~Note that the \Explore procedure works in an DFS manner: a node replies an ACK to its parent only after all of its children reply an ACK to it. Similarly, the root completes an \Explore step after receiving an ACK from all its children, which means that they have all completed their \Explore steps. 

Note that each node invokes exactly $n$ \Explore steps due to the dummy \Explore step initiated in Line~\ref{line:dummy}. To see this, consider the same algorithm without the extra \Explore in Lines~\ref{line:dummy}--\ref{line:dummyEnd} and note that nodes at distance $k$ from the root $r$ invoke exactly $n-k$ \Explore steps. Adding this extra \Explore step at every node makes all nodes invoke \Explore exactly $n$ times. Specifically, during the $n$-th invocation of \Explore by $r$, every node with distance $1$ from $r$ invokes its $(n-1)$-th \Explore step, and then, \emph{before sending an ACK to $r$} in Line~\ref{line:ACK}, it invoke its $n$-th \Explore step. 
This then continues in an inductive manner all the way to the leaves.

Only once all of its children have sent an ACK and thus terminated the protocol and switched to~$\DONE$, a node replies with an ACK to its parent and changes its state to~$\DONE$. It follows that when the root receives an ACK for the $n$-th \Explore step from all of its children, all the nodes have terminated the protocol and switched state to~$\DONE$.
\end{proof}

\section{A Distributed Interactive Coding Scheme} 
\label{sec:coding}

In this section we show how to simulate any asynchronous protocol over a noisy network whose topology is unknown in advance.
Our main theorem for this part is the following.
\newcommand{\ThmCoding}
{
Any asynchronous protocol $\pi$ over a network~$G$ can be simulated by an asynchronous protocol~$\Pi$  resilient to an $\Theta(1/n)$-fraction of adversarial message corruption, 
and it holds that $\CC(\Pi) = O(nm\log n) + \CC(\pi) \cdot O(n^2 \log n)$.
}
\begin{theorem}
\label{thm:coding}
\ThmCoding
\end{theorem}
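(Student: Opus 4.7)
The plan is to compose three tools that have already been foreshadowed in the introduction. First I would have the root $r$ initiate the content-oblivious BFS of Algorithm~1, producing a spanning tree $\T$ in which every node knows its parent and its tree children, with total message complexity $O(nm)$. Because this construction never inspects the content of any received message, the adversary's corruptions have no effect on the resulting tree; Claim~\ref{clm:order} further ensures that the root (and hence all nodes) can pinpoint global completion of this phase, so subsequent communication whose content \emph{does} matter can be cleanly separated from the BFS phase.

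Next, the goal is to convert the noiseless asynchronous protocol $\pi$ into a \FUS protocol $\pi'$ on the tree $\T$ to which Lemma~\ref{lem:HS} can be applied. To do so, I would replace every point-to-point message of $\pi$ from $u$ to $v$ with a flood over $\T$: for up to $n$ synchronous rounds, every node forwards on every tree edge an $O(\log n)$-bit symbol carrying a (source, destination, payload) tuple. After at most $n$ rounds the intended recipient $v$ has received the message, and all other nodes can update their local simulation of $\pi$ accordingly before proceeding to simulate $v$'s next action. Each simulated message of $\pi$ therefore contributes $O(n \cdot |E(\T)| \cdot \log n) = O(n^2 \log n)$ bits to $\pi'$, giving $\CC(\pi') = O(n^2 \log n)\cdot \CC(\pi)$. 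Invoking Lemma~\ref{lem:HS} on $\pi'$ over $\T$, for which $|E(\T)|=n-1$, yields a synchronous simulation of $\pi'$ that is resilient to an $\Theta(1/n)$-fraction of corruptions with only constant overhead.

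The final step is to execute the resulting \FUS-coded protocol in our asynchronous environment. Here I would employ a root-triggered synchronizer, exploiting the fact that $\pi'$ is fully-utilized on $\T$: the root uses $\T$ to broadcast the trigger for round $i+1$ only after a convergecast on $\T$ confirms that every node has received its round-$i$ incoming symbol on every tree edge. Because $\pi'$ already sends on every tree edge in every round, the synchronizer piggybacks its signals on the existing symbols and contributes only a constant multiplicative overhead. Summing the two contributions gives $\CC(\Pi) = O(nm\log n) + O(n^2\log n)\cdot \CC(\pi)$, where the first term reflects the BFS messages padded to $\log n$-bit symbols consistent with the alphabet used afterwards.

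I expect the main obstacle to be the correctness of the \emph{handoff and interleaving} of these three layers in the presence of an adaptive adversary. Specifically, we must argue: (i) no corruption during BFS can fool a node into confusing the BFS phase with the simulation phase --- handled by Claim~\ref{clm:order} together with the dummy \Explore that guarantees each node receives exactly $n$ \Explore signals from its parent, so the $(n+1)$-st message from the parent is unambiguously the first message of $\Pi'$; (ii) the root-triggered synchronizer is deadlock-free under non-FIFO, asynchronous channels --- this reduces to the fully-utilized property, which guarantees that each node can deterministically identify the round to which each incoming symbol belongs; and (iii) the $\Theta(1/n)$ corruption budget, computed against the total number of messages of $\Pi$, is respected by the underlying HS layer, which needs a $\Theta(1/|E(\T)|) = \Theta(1/n)$ bound and therefore matches. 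Once these three points are verified, $\Pi$ deterministically simulates $\pi$ with the claimed resilience and overhead.
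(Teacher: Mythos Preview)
Your proposal is correct and follows essentially the same three-layer architecture as the paper: content-oblivious BFS to obtain~$\T$, conversion of~$\pi$ to a \FUS protocol over~$\T$ (Algorithm~\ref{alg:fus}), Hoza--Schulman coding (Lemma~\ref{lem:HS}), and the root-triggered synchronizer (Algorithm~\ref{alg:FU-sim-spanner}) to run the coded protocol asynchronously. Your three flagged obstacles (phase separation via Claim~\ref{clm:order}, deadlock-freeness of the synchronizer under non-FIFO channels, and the resilience accounting over the full message count of~$\Pi$) are exactly the points the paper verifies in Lemmas~\ref{lemma:coding-communication} and~\ref{lemma:coding-correct}.
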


\subsection{Obtaining a \FUS protocol from an asynchronous input protocol $\pi$}

The first ingredient we need is a way to transform an asynchronous protocol into a \FUS protocol, in order to be able to use the Hoza-Schulman coding scheme. This transformation does not need to be robust to noise, as it is not going to be executed as is, but we will rather encode the \FUS protocol and execute the robust version. Later, we transform it back into the asynchronous setting using a synchronizer that is robust to noise.

Recall that in a \FUS protocol nodes operate in rounds, where at each round every node communicates one symbol (from some fixed alphabet~$\Sigma$) on each communication channel connected to it. We will assume the alphabet is large enough to convey all the information that our coding scheme needs. In particular, we assume each symbol contains $O(\log n)$~bits.
\begin{remark}\label{rem:fixedAlphabet}
In the following, we assume the network~$G$ is composed of channels with a fixed alphabet~$\Sigma$ of size 
$\textrm{poly}(n)$. That is, each symbol contains~$O(\log n)$ bits.  
\end{remark}
In order to avoid confusion, we will use the term ``symbols'' for messages sent by the coding scheme, while using ``messages'' to indicate the information sent by the noiseless protocol~$\pi$. %

The construction of our transformation into a \FUS protocol is given in Algorithm~\ref{alg:fus}. In this construction, each node $u$ maintains a queue of symbols that it needs to relay throughout a locally known spanning tree $\T$. 
The queue is initialized with the bits of any message that $u$ needs to send according to the input protocol~$\pi$, where each bit is encapsulated in a symbol that contains the bit value, the identity of the source (i.e., of $u$), and the identity of the destination node.
Every symbol received by~$u$ is pushed into its queue, and relayed to $u$'s neighbors in future rounds.
In particular, upon receiving the symbol $(src,dest,val)$ from a node~$w$, the node $u$ pushes the vector $(src,dest,val,w)$ to its queue. If $u$ is the destination node, it does not push the symbol into its queue; instead, $u$ collects this bit for decoding the message. 

The transformation works by having each node pop a record from its queue in each round and send the obtained triplet to all of its neighbors in $\T$ except for the node $w$ from which the message was received. If the queue is empty then an empty message is sent to all neighbors in $\T$.

\begin{algorithm}[ht]
\caption{Simulating an asynchronous protocol~$\pi$ by a \FUS protocol~$\pi'$.}
\label{alg:fus}
\begin{algorithmic}[1]

\Statex
\textbf{Initialization:}
Given is a BFS tree $\T$ rooted at $r$.
\Statex

\State In every round, for every node $u$:
\algblock[Name]{Begin}{End}
\Begin
	\For {every node $v$} 
		\State Let $M_1\cdots M_\ell$ be the bit representation of a message $M$ that $u$ has to send to $v$ in $\pi$.
		\State Push $(u,v,M_1,\bot),\cdots ,(u,v,M_\ell,\bot)$ into $queue_u$
	\EndFor

	\State $(src,dest,val,w) \gets$ pop item out of $queue_u$
	\If {$(src,dest,val,w)$ is not empty}
		\State send $(src,dest,val)$ to every $v \in \neigh_u(\T) \setminus \{w\}$ and send $\bot$ to $w$
	\Else 
		\State send $\bot$ to every $v \in \neigh_u(\T)$
	\EndIf
	
	\State For every message $(src,dest,val)$ received from $w$:
	\If {$dest\ne u$}
		\State push $(src,dest,val,w)$ into $queue_u$
	\Else 
		\State collect the bits $val$ for decoding $M$\label{line:decode}
	\EndIf
\End
\end{algorithmic}
\end{algorithm}

Note that all fragments of a message are received in order at the destination, since $\T$ has no cycles. Therefore, we can assume that the protocol sends a predefined symbol that indicates the end of the message, in order to avoid an assumption of knowledge of the message length. This ensures that Line~\ref{line:decode} is well-defined.
Our transformation guarantees the following.
\begin{lemma}
\label{lemma:fus}
Algorithm~\ref{alg:fus} creates a \FUS protocol $\pi'$ that simulates $\pi$, in the sense that all messages of $\pi$ are sent and received.
The simulation~$\pi'$ has a communication overhead of $O(n^2\log{n})$ with respect to~$\pi$, and a message complexity of~$\CC(\pi)\cdot O(n^2)$.
\end{lemma}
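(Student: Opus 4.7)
The plan is to establish three things: that $\pi'$ fits the \FUS model on $\T$, that it faithfully simulates $\pi$, and that the round count translates into the claimed communication and message bounds.

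For the structural and correctness part, I would first observe that $\pi'$ is indeed a \FUS protocol over $\T$: in every round, each node sends exactly one symbol on each adjacent tree edge (either the triplet $(src,dest,val)$ popped from its queue, or $\bot$), and by Remark~\ref{rem:fixedAlphabet} each such symbol fits in $O(\log n)$ bits. To argue that every message $M = M_1 \cdots M_\ell$ sent by $u$ to $v$ in $\pi$ is reconstructed at $v$, I would let $u = w_0, w_1, \ldots, w_d = v$ be the unique $u$-to-$v$ path in $\T$ and prove by induction on $j$ that the symbols tagged $(u,v)$ popped at $w_j$ appear in the order $M_1, \ldots, M_\ell$. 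The base case holds because $u$ pushes them into its FIFO queue in order; the inductive step follows because $w_{j-1}$ pops them in that order and forwards them along the tree edge to $w_j$, which then pushes them into its own FIFO queue while maintaining the relative order (interleaving by unrelated symbols does not disturb this). Thus $v$ receives the bits in order and collects them at Line~\ref{line:decode} to recover $M$.

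For the complexity, each round of $\pi'$ sends $2(n-1) = O(n)$ symbols, each of $O(\log n)$ bits. It remains to bound the number of rounds $R$, and I would argue $R = O(n \cdot \CC(\pi))$ as follows. Each of the $\CC(\pi)$ bits of $\pi$ becomes a single symbol that must propagate across a tree path of length at most $n-1$. In the worst case, $\pi$ is inherently sequential---each message can be generated only after earlier ones are fully received---so each bit has to traverse the tree's diameter $O(n)$ before the next is emitted, giving $R = O(n) \cdot \CC(\pi)$. Combining, the total message count is $R \cdot O(n) = \CC(\pi) \cdot O(n^2)$, and the total bit communication is $R \cdot O(n \log n) = \CC(\pi) \cdot O(n^2 \log n)$, matching the claimed overhead of $O(n^2 \log n)$.

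The main technical obstacle I anticipate is accounting for queueing delays when many bits are concurrently in flight. I would handle this by observing that at any node $u$ the number of distinct symbols ever passing through $u$'s queue is at most $\CC(\pi)$ (each bit travels a unique tree path and hence visits $u$ at most once), so $u$ drains its queue within $O(\CC(\pi))$ rounds after the last arrival; together with the $O(n)$ diameter bound on pipelined broadcast, this is comfortably subsumed by the $O(n \cdot \CC(\pi))$ bound derived under the worst-case sequential assumption.
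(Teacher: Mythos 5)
Your proposal is correct and follows essentially the same route as the paper: $\pi'$ is \FUS by construction, every message of $\pi$ is delivered (in order) over the tree, and the bound comes from $O(n)$ relay rounds per bit of $\pi$ times $O(n)$ symbols of $O(\log n)$ bits per round, taking the fully sequential execution as the worst case and noting that concurrently in-flight bits only reduce the per-bit cost. Your extra details (the FIFO induction along the unique tree path and the count of symbols passing through each node's queue) merely elaborate points the paper treats more briefly in the text preceding the lemma and in its closing worst-case remark.
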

\begin{proof}
By construction, every node sends a symbol to all of its neighbors in each round and hence Algorithm~\ref{alg:fus} is a \FUS protocol. In addition, eventually every messages of~$\pi$ reaches its destination and hence the obtained \FUS protocol simulates~$\pi$. For the communication overhead, note that $O(\log{n})$ bits of the identities of source and destination are appended to each bit sent by~$\pi$; that is, a symbol size of $O(\log n)$ bits suffices. In addition, a delivery of a single message of~$\pi$ may require $O(n)$~rounds of relaying symbols sent along the tree~$\T$. In each such round there are $O(n)$ symbols that are sent since the obtained protocol is a \FUS protocol. This implies that $O(n^2)$ symbols are communicated per each bit of~$\pi$ and gives a total communication overhead of~$O(n^2\log{n})$.

Note that this is a worst-case analysis that assumes a single bit travels within the network at each time so that another bit is sent only after a previous bit reached its destination. If several bits are sent consecutively or if several nodes send bits simultaneously, the resulting number of messages can only decrease.
\end{proof}

\subsection{Root-triggered synchronizers}
We now describe our root-triggered synchronizer, which we use in order to execute the resilient synchronous protocol (which can be obtained by using the Hoza-Schulman coding scheme) in our asynchronous setting.
We constructed a tree-based synchronizer as in Awerbuch~\cite{awerbuch85}. The synchronizer gets as an input a \FUS protocol $\Pi'$ and outputs an equivalent asynchronous protocol $\Pi$ that simulates $\Pi'$ round by round. 

We first describe our simulation of a single round of $\Pi'$ over a \emph{tree}. Our synchronizer works as follows. The root initiates the process by sending its messages, determined by~$\Pi'$, to its children. This triggers its children to send their messages to their children, but not yet to their parent, and so forth, so that messages propagate all the way to the leaves. Once a leaf receives a message, it sends its message to its parent, and similarly, any node which receives a message from all of its children sends its message to its parent. This continues inductively all the way back to the root, which eventually receives messages from all of its children and complete the simulation of this round of~$\Pi'$.

We build upon the above idea in order to simulate a fully-utilized synchronous algorithm $\Pi'$ over an arbitrary graph $S$. That is, each node $u$ has a message $m_{uv}$ designated to each one of its neighbors~$v\in \neigh_u(S)$.%
\footnote{Later, in Section~\ref{sec:coding-spanner}, we apply our root-triggered synchronizer to an input protocol on $G$ which is fully-utilized on a spanning subgraph $S$ of $G$.}  

The pseudocode is given in Algorithm~\ref{alg:FU-sim-spanner}. We single out a node $r$, which we refer to as the \emph{initiator}, which starts by sending a message to all of its neighbors in $S$.
This triggers each neighboring node to send its messages to its neighbors, but not yet to its parent, which is now simply the neighbor from which it receives the \emph{first} message. This continues inductively, and  %
only when a node receives messages from all of its neighbors it sends its message to its parent. Eventually, the initiator receives messages from all of its neighbors and completes the simulation of the round.

\begin{algorithm}[ht]
\caption{A root-triggered synchronizer for a \FUS  protocol $\Pi'$ over a graph~$S$.}
\label{alg:FU-sim-spanner}
\begin{algorithmic}[1]

\Statex
\textbf{Initialization:}
All nodes begin in the $\INIT$ state. %
\Statex

\State For node $r$ designated as initiator:
\algblock[Name]{Begin}{End}
\Begin
	\State $\textit{state}_r \gets \ACTIVE$
	\State $\textit{parent}_r \gets \bot$
	\State $\textit{children}_r \gets  \neigh_r(S)$
	
	\Statex
	\State $r$ sends $m_{rv}$ to each node $v \in \textit{children}_r$
	\State $r$ waits to receive a message $m_{vr}$ from every node $v\in \textit{children}_r$
	\State $\textit{state}_r \gets \DONE$
\End
\Statex
\State For every node $u$, upon receiving a message from $w$ when in state $\INIT$:
\algblock[Name]{Begin}{End}
\Begin
	\State $\textit{state}_u \gets \ACTIVE$
	\State $\textit{parent}_u \gets w$
	\State $\textit{children}_u \gets  \neigh_r(S)  \setminus \{w\}$
		
	\Statex
	\State $u$ sends $m_{uv}$ to each node $v \in \textit{children}_u$
	\State $u$ waits to receive a message $m_{vu}$ from every node $v\in \textit{children}_u$
	\State $u$ sends $m_{uw}$ to $w$
	\State $\textit{state}_u \gets \DONE$
\End

\end{algorithmic}
\end{algorithm}

\goodbreak

We prove the following properties of Algorithm~\ref{alg:FU-sim-spanner}.
\begin{lemma}
\label{lem:noisy-sync}
By the end of Algorithm~\ref{alg:FU-sim-spanner} each node $u$ receives the messages $m_{vu}$ from every node $v\in \neigh_u(S)$, and all nodes are in state $\DONE$.
\end{lemma}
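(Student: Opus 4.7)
The plan is to prove the lemma in three steps: (i) show that the algorithm implicitly builds a rooted spanning tree $T$ of $S$, (ii) verify that every edge of $S$ carries a message in each direction, and (iii) induct over $T$ to show termination without deadlock.

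For (i), I would define $\textit{parent}_u$ for every non-root $u$ as the neighbor whose message first arrives at $u$, matching the assignment in the $\INIT$ handler. Inducting on a causal order of activations, $r$ activates first, and whenever a node $v$ activates it sends a message to every non-parent neighbor in $S$; such a message eventually wakes any still-$\INIT$ recipient. Because $S$ is connected (it is a spanning subgraph of $G$ in our applications), every node in $V$ eventually leaves $\INIT$, and the parent pointers form a rooted spanning tree $T$ of $S$.

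For (ii), note that upon entering $\ACTIVE$ a node $u$ sends $m_{uv}$ to each $v \in \textit{children}_u = \neigh_u(S) \setminus \{\textit{parent}_u\}$, and later sends $m_{u,\textit{parent}_u}$ to its parent. Hence $u$ eventually dispatches an outgoing message on every incident edge of $S$. Applying this to both endpoints of each edge, every message $m_{vu}$ required by the lemma is sent by its sender and, since the asynchronous model delivers every sent message in finite time, is received at $u$.

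The delicate part is (iii), which must rule out deadlock despite the fact that $\textit{children}_u$ can contain non-tree neighbors, so the dependencies among the blocking waits are not obviously acyclic. The key observation is that every node sends to all of its $\textit{children}_u$ \emph{before} it begins to wait: on any ``cross'' edge $(u,v)$ where neither endpoint is the other's parent in $T$, both $m_{uv}$ and $m_{vu}$ are in flight as soon as the later endpoint activates, ungated by any subsequent blocking step. The only remaining dependencies lie along $T$, since a tree-child $v$ of $u$ sends $m_{vu}$ only after completing its own wait. I would therefore induct on the height in $T$: a leaf has no tree-child to wait for, so its cross-edge messages arrive, it sends to its parent, and enters $\DONE$; for an internal $v$, once every subtree rooted at a tree-child of $v$ has terminated, $v$ has received all messages from $\textit{children}_v$, sends $m_{v,\textit{parent}_v}$ to its parent, and enters $\DONE$. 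Applying this to $r$ concludes the proof, with the deadlock concern being the main obstacle that is dissolved by the send-before-wait structure of the algorithm.
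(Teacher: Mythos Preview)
Your proposal is correct and follows essentially the same approach as the paper: define the spanning tree $T$ via the parent pointers, observe that every node sends to all non-parent neighbors \emph{before} blocking (so cross-edge messages are never gated), and then induct on height in $T$ to show each node eventually receives from all neighbors and enters $\DONE$. Your treatment of the deadlock concern via the ``send-before-wait'' structure is exactly the paper's observation that ``the only messages that get delayed are messages from nodes to their parents,'' just stated more explicitly.
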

\begin{proof}
Let~$T$ denote the tree rooted at $r$ that is induced by the edges of $S$ that connect each node $u$ with $\textit{parent}_u$. By construction, each node $u\neq r$ sets its parent to be the first node from which it receives a messages and hence $u$ sets exactly one node as its parent in an acyclic manner, inducing the tree~$T$. 

We prove by induction on the height of the nodes with respect to $T$, that each node $u$ receives the messages $m_{vu}$ from every node $v\in \neigh_u(S)$ and then switches its state to $\DONE$. Note that every node sends its messages to all of its neighbors so that eventually all such messages arrive, and we only need to verify that the message from $u$ to $parent_u$ is eventually sent.

The base case is for the leaves of~$T$, which indeed receive messages from all of their neighbors since the only messages that get delayed are messages from nodes to their parents. Assume this holds for all nodes at height $h$, and consider a node~$u$ at height~$h+1$. Node $u$ receives messages from all of its siblings in the tree. By the induction hypothesis, every child $v$ of $u$ in~$T$ receives all of its messages and switches to state~$\DONE$. This implies that in between, node $v$ sends its message $m_{vu}$ to its parent~$u$. When this happens for all nodes $v \in children_{u}$ it is the case that $u$ receives the messages~$m_{vu}$ from every node $v\in \neigh_u(S)$ and then switches its state to~$\DONE$.
\end{proof}

By having the initiator $r$ control the simulation of each round of a simulated \FUS protocol $\Pi'$, we obtain synchronization for an arbitrary number of rounds.

\begin{corollary}
\label{cor:noisy-sync}
Multiple consecutive invocations of Algorithm~\ref{alg:FU-sim-spanner} simulate  any input \FUS protocol $\Pi'$ round by round, resulting in an asynchronous protocol $\Pi$ that uses the same number of messages.
\end{corollary}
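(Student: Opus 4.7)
The plan is to extend Lemma~\ref{lem:noisy-sync} by induction on the round index of $\Pi'$. The first step is to observe that the end of a single invocation of Algorithm~\ref{alg:FU-sim-spanner} is detectable locally by the initiator $r$, and that this detection certifies \emph{global} completion of the corresponding round. This follows from the bottom-up cascade used inside the proof of Lemma~\ref{lem:noisy-sync}: every non-initiator node $u$ only sends $m_{uw}$ to $w=\textit{parent}_u$ after it has received $m_{vu}$ from every $v\in \neigh_u(S)\setminus\{w\}$ and has itself become ready to switch to \DONE. Hence, when $r$ transitions to \DONE, all its children have already reached the end of their own $\ACTIVE$ phases, and inductively every node in the network has received its full set of round-$i$ messages.

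Given this certification, I would have $r$ orchestrate consecutive invocations: after round $i$, $r$ resets its local state to $\INIT$, loads the round-$(i{+}1)$ messages $\{m_{rv}^{(i+1)}\}$ computed from $\Pi'$, and re-launches Algorithm~\ref{alg:FU-sim-spanner}. Each non-initiator node, upon finishing round $i$ (i.e., upon switching to \DONE), likewise resets to $\INIT$ and waits for the next incoming symbol, which triggers its participation in round $i{+}1$. Applying Lemma~\ref{lem:noisy-sync} to the $(i{+}1)$-th execution then shows that every node correctly receives the round-$(i{+}1)$ messages from every $S$-neighbor, closing the induction. Iterating over all rounds of $\Pi'$ produces the asynchronous simulation $\Pi$.

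The message count is immediate from Algorithm~\ref{alg:FU-sim-spanner}: within each invocation every node $u$ sends $m_{uv}$ exactly once to each $v\in \neigh_u(S)$, either as a child-bound transmission during the $\ACTIVE$ phase or, when $v=\textit{parent}_u$, as the single final message before switching to \DONE. This matches the per-round message count of $\Pi'$, so summing over all rounds preserves the overall message complexity.

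The main subtlety I would be careful about is that no round number is attached to any message, and the asynchronous network could in principle let a round-$(i{+}1)$ message arrive before a round-$i$ message. I would rule this out causally: round $i{+}1$ is initiated only after $r$ reaches \DONE for round $i$, which, by the first paragraph, happens only after every node has already received all of its round-$i$ messages. Consequently, the invocations are cleanly separated in causal order and no sequence tag is required---each message reaching $u$ during the $j$-th invocation is unambiguously the round-$j$ symbol from its sender. This is the key observation that makes the untagged, asynchronous simulation faithful to the synchronous schedule of $\Pi'$.
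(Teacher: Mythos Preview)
Your proposal is correct and matches the paper's intended approach; the paper states this corollary without proof, simply noting that having $r$ control each round yields synchronization for arbitrarily many rounds, and your induction together with the causal-separation argument in the last paragraph supplies exactly those omitted details. One small caveat: when $S$ is not a tree, a neighbor $v\in\textit{children}_r$ may have $\textit{parent}_v\neq r$ and thus send $m_{vr}$ at the \emph{beginning} of its \ACTIVE phase, so the claim ``all its children have already reached the end of their own \ACTIVE phases'' should be read along the parent-pointer tree $T$ from the proof of Lemma~\ref{lem:noisy-sync} (which your ``bottom-up cascade'' already invokes) rather than along $\textit{children}_r=\neigh_r(S)$.
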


\subsection{The Coding Scheme}

We can now complete the details of our coding scheme for asynchronous networks with unknown topology. The scheme consists of two parts. In the first part, the scheme uses the BFS construction given in Section~\ref{sec:BFS} in order to obtain a BFS tree~$\T$. Note that the nodes ignore the content of messages during this part, therefore an adversary that can only modify messages cannot disturb this part. 

In the second part, the scheme translates~$\pi$ into a \FUS protocol~$\pi'$ via $O(n)$ \FUS rounds over~$\T$. This is done using Algorithm~\ref{alg:fus}. 
The protocol $\pi'$ is still non-resilient to noise and hence is not the protocol that is executed. Instead, we add a coding layer for multiparty interactive communication, namely via the Hoza-Schulman coding scheme, whose properties are given in Lemma~\ref{lem:HS}. This results in a \FUS protocol $\Pi'$ that is resilient to noise, which we then execute through our root-triggered synchronizer to obtain the asynchronous resilient protocol $\Pi$.

The complete construction is given in Algorithm~\ref{alg:coding}. We prove its communication overhead in the following lemma, and then we prove its correctness and resilience.

\begin{algorithm}[htp]
\caption{A coding scheme~$\Pi$ for any noiseless asynchronous input protocol~$\pi$.}
\label{alg:coding}
\begin{algorithmic}[1]

\Statex
\textbf{Initialization:}
All nodes begin in the $\INIT$ state. %
\Statex

\State For node $r$ designated as initiator:
\algblock[Name]{Begin}{End}
\Begin
	\State Execute Algorithm~1 %
with $r$ designated as root. Let $\T$ be the obtained BFS tree.

	\State Let $\pi'$ be a \FUS algorithm induced by $\pi$ using Algorithm~\ref{alg:fus}.
	\State Let $\Pi'= \text{HS}(\pi')$ be the Hoza-Schulman coding scheme  for~$\pi'$. %
	\State Simulate $\Pi'$ using the synchronizer of Algorithm~\ref{alg:FU-sim-spanner} over $\T$ with $r$ as the initiator.
\End
\end{algorithmic}
\end{algorithm}

\begin{lemma}
\label{lemma:coding-communication}
For any asynchronous protocol $\pi$ the coding $\Pi$ of Algorithm~\ref{alg:coding} has a communication complexity of 
\[
\CC(\Pi) = O(nm \log n) + \CC(\pi) \cdot O(n^2 \log n).
\]
\end{lemma}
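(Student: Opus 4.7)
The plan is to decompose $\CC(\Pi)$ along the three stages of Algorithm~\ref{alg:coding}: (a) the content-oblivious BFS construction of Section~\ref{sec:BFS}; (b) the transformation of $\pi$ into a FUS protocol $\pi'$ on the tree $\T$ via Algorithm~\ref{alg:fus} followed by the Hoza-Schulman encoding $\Pi' = \textup{HS}(\pi')$; and (c) the simulation of $\Pi'$ by the root-triggered synchronizer of Algorithm~\ref{alg:FU-sim-spanner}. I would bound each stage separately using the earlier lemmas of this section together with Theorem~\ref{thm:BFS}, and then sum the contributions.

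For stage (a), I would invoke Theorem~\ref{thm:BFS}, which gives an $O(nm)$ message count, and multiply by the $O(\log n)$ bits per channel symbol fixed in Remark~\ref{rem:fixedAlphabet}, yielding $O(nm \log n)$ bits. For stage (b), Lemma~\ref{lemma:fus} provides $\CC(\pi') = \CC(\pi) \cdot O(n^2 \log n)$, and Lemma~\ref{lem:HS} preserves this bound asymptotically since Hoza-Schulman adds only an $O(1)$ multiplicative overhead, giving $\CC(\Pi') = \CC(\pi) \cdot O(n^2 \log n)$. For stage (c), Corollary~\ref{cor:noisy-sync} ensures that the synchronizer uses the same number of messages as $\Pi'$ when executing it round by round in the asynchronous model; since each such message carries a single $O(\log n)$-bit symbol, the bit cost of this stage matches $\CC(\Pi')$. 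Adding the contributions from (a) and (b)/(c) yields the claimed $\CC(\Pi) = O(nm \log n) + \CC(\pi) \cdot O(n^2 \log n)$.

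The main subtlety I would be careful about is keeping the units consistent across messages, symbols, and bits as we pass through the layered construction. In particular, BFS ``messages'' have no semantic payload but still occupy one channel symbol, the FUS simulation measures communication in symbols of the fixed polynomial-size alphabet, and the Hoza-Schulman layer is stated as ``$O(1)$ communication overhead''; one should verify that under the $O(\log n)$-bit symbol convention of Remark~\ref{rem:fixedAlphabet} these all compose to give the same asymptotic bit accounting, with no hidden logarithmic factor from the HS layer. This is routine but worth making explicit before summing.
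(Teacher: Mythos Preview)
Your proposal is correct and follows essentially the same approach as the paper: both proofs decompose $\CC(\Pi)$ into the BFS stage (via Theorem~\ref{thm:BFS} and the $O(\log n)$-bit symbol convention of Remark~\ref{rem:fixedAlphabet}), the FUS transformation plus Hoza--Schulman encoding (via Lemma~\ref{lemma:fus} and Lemma~\ref{lem:HS}), and the synchronizer (via Corollary~\ref{cor:noisy-sync}), then sum. Your explicit attention to unit consistency across messages, symbols, and bits is, if anything, slightly more careful than the paper's own treatment.
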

\begin{proof}
Recall that we assume channels with a fixed alphabet size, so that each symbol contains $O(\log n)$~bits (Remark~\ref{rem:fixedAlphabet}). 

The $O(nm \log n)$ term follows from Theorem~\ref{thm:BFS}. The transformation of Algorithm~\ref{alg:fus} induces a communication overhead factor of $O(n^2\log{n})$ per bit of $\pi$, as shown in Lemma~\ref{lemma:fus}.

By Lemma~\ref{lem:HS} 
there exists a resilient \FUS protocol~$\Pi'$ that simulates~$\pi'$ whose message/communication complexity is linear in the message complexity of~$\pi'$. 
Finally, Corollary~\ref{cor:noisy-sync} gives that the asynchronous simulation of $\Pi'$ via Algorithm~\ref{alg:fus} has the same message and communication complexity as~$\Pi'$.

It follows that the total overhead in communication of Algorithm~\ref{alg:coding} is $O(n^2\log n)$, as claimed.
 \end{proof}

\begin{remark}
Note  that the BFS construction (Algorithm~1) %
ignores the contents of messages sent.
Hence, if we relax the assumption of Remark~\ref{rem:fixedAlphabet}, the communication complexity can be reduced by sending empty messages (without any payload) during that step. In this case 
the message complexity of $\Pi$ remains $O(mn) + \CC(\pi)\cdot O(n^2)$ yet the communication complexity effectively reduces to $\CC(\Pi) =  \CC(\pi) \cdot O(n^2 \log n)$.
\end{remark}

\begin{remark}\label{rem:SendBlocks}
In the above, each message sent in~$\pi$
is split into single bits and a separate symbol is dedicated to each such bit. However,
instead of communicating a single bit $M_i$ in each symbol, nodes can aggregate blocks of $O(\log n)$ bits, so that the payload of each symbol is a single \emph{block} (of $\pi$'s communication) while keeping the coding's symbol size of the magnitude~$O(\log n)$. 

For some protocols, namely those which send large messages, this may result in a slight logarithmic decrease in the message complexity.
This optimization, however, will not change the asymptotic overhead in the worst case, when the protocol $\pi$ communicates a single bit at a time. 
\end{remark}

\begin{lemma}
\label{lemma:coding-correct}
For any asynchronous protocol $\pi$ the coding $\Pi$ of Algorithm~\ref{alg:coding} correctly simulates~$\pi$ even if up to $\Theta(1/n)$ of the messages are adversarially corrupted.
\end{lemma}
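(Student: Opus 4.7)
The plan is to decompose $\Pi$ into its two sequential phases---the content-oblivious BFS construction of Algorithm~1, followed by the synchronized Hoza--Schulman simulation of the \FUS protocol $\pi'$---and argue that noise is harmless in the first phase and only threatens the second, where it is absorbed by the Hoza--Schulman layer. First I would invoke Theorem~\ref{thm:BFS} together with Claim~\ref{clm:order}: since every state transition and decision in Algorithm~1 depends only on the arrival and the sender of a message, never on its content, adversarial \emph{content} corruption cannot affect the resulting tree $\T$ or the termination count. So at the end of this phase every node holds the same tree $\T$ it would hold in a noiseless execution, knows its parent and children in $\T$, and, from having participated in exactly $n$ \Explore rounds, unambiguously recognizes the next message arriving from its parent as belonging to the coding phase.

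Next I would verify that the coding phase simulates $\pi$ provided its own internal corruption budget is not exceeded. Lemma~\ref{lemma:fus} gives a \FUS protocol $\pi'$ on $\T$ that faithfully executes $\pi$; Lemma~\ref{lem:HS} encodes $\pi'$ into $\Pi'=\mathrm{HS}(\pi')$, which tolerates a $\Theta(1/|E(\T)|)=\Theta(1/n)$ fraction of its own symbol corruptions; and the root-triggered synchronizer of Algorithm~\ref{alg:FU-sim-spanner} transforms $\Pi'$ into an asynchronous execution using the same number of messages (Lemma~\ref{lem:noisy-sync} and Corollary~\ref{cor:noisy-sync}). The synchronizer's control flow is itself content-oblivious---transitions between \INIT, \ACTIVE and \DONE are triggered only by message receipts and sender identities---so each message of the synchronized execution carries exactly one symbol of $\Pi'$ across the corresponding tree edge in the corresponding round, and a message corruption in $\Pi$ is precisely a symbol corruption in $\Pi'$.

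The final step is a noise-budget accounting. The adversary may corrupt at most $\mu \cdot |\Pi|$ messages of $\Pi$, where $|\Pi|$ is the total message count. Corruptions spent on BFS messages are wasted, so the worst case places the entire budget inside the synchronized Hoza--Schulman phase, which contributes $N_{\mathrm{HS}}$ messages. From Lemma~\ref{lemma:fus} and Lemma~\ref{lem:HS} we have $N_{\mathrm{HS}} = \CC(\pi) \cdot \Theta(n^2)$, while the BFS phase contributes $O(nm)$ messages, so
\[
\frac{|\Pi|}{N_{\mathrm{HS}}} \;=\; 1 + O\!\left(\frac{m}{n\,\CC(\pi)}\right) \;\longrightarrow\; 1 \quad \text{as } \CC(\pi)\to\infty.
\]
Hence for $\mu = \Theta(1/n)$ and $\CC(\pi)$ sufficiently large, the induced fraction of corrupted $\Pi'$-symbols stays below the Hoza--Schulman threshold of Lemma~\ref{lem:HS}, so $\Pi'$ decodes successfully, the synchronizer delivers the transcript of $\pi'$, and by Lemma~\ref{lemma:fus} every node outputs the transcript of $\pi$.

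The main obstacle is precisely this last accounting step: the adversary's fraction is measured against \emph{all} messages of $\Pi$, including the $O(nm)$ content-oblivious BFS messages, yet the corruption budget that actually matters is the one that can reach the Hoza--Schulman layer. The reason the conversion works out is that the overhead, and therefore the resilience claim, are stated asymptotically in $\CC(\pi)$ (Section~\ref{sec:prelim}), so the $O(nm)$ term in the denominator becomes negligible relative to $N_{\mathrm{HS}}$ and the overall tolerated fraction remains $\Theta(1/n)$. The remaining verifications are immediate consequences of the lemmas already established.
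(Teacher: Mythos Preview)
Your proposal is correct and follows essentially the same approach as the paper: decompose into the content-oblivious BFS phase (immune to content corruption by Theorem~\ref{thm:BFS} and Claim~\ref{clm:order}) and the synchronized Hoza--Schulman phase (resilient to $\Theta(1/n)$ by Lemma~\ref{lem:HS} over the tree, with Corollary~\ref{cor:noisy-sync} handling the asynchronous execution), then do the same asymptotic noise-budget accounting as $\CC(\pi)\to\infty$. Your explicit remark that the synchronizer's control flow is itself content-oblivious is a useful clarification that the paper leaves implicit.
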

\begin{proof}
Correctness and resilience to noise are proved as follows. 
Theorem~\ref{thm:BFS} proves the correctness of our content-oblivious BFS construction despite noise, since the contents of the sent messages are ignored by the nodes. We emphasize that by Corollary~\ref{clm:order}, all of the nodes know when to stop ignoring the content of messages for the BFS construction and start executing that synchronizer over $\Pi'$.

Lemma~\ref{lemma:fus} proves that indeed $\pi'$ is a \FUS transformation of $\pi$. By Lemma~\ref{lem:HS}, we have that $\Pi'$ is a \FUS protocol that simulates $\pi'$ in a manner that is resilient to corrupting up to $\Theta(1/|\tilde E|)$ of the messages, where $\tilde E$ is the edges over which the protocol communicates. In our case these are the edges of the BFS tree~$\T$, and hence this step is resilient to an $\Theta(1/n)$-fraction of corruptions.

Finally,  Corollary~\ref{cor:noisy-sync} gives that $\Pi'$ is executed correctly in the asynchronous setting despite noise.

We now need to sum up the maximal number of symbols that can be corrupted and the total number of communicated symbols.
Recall that the noise resilience is the ratio between these two sums. 
Since corruption can only take place on symbols of the Hoza-Schulman coding, of which there are $\CC(\pi) \cdot O(n^2)$ many, we get that the scheme is resilient to at most $\Theta(1/n) \cdot CC(\pi) \cdot O(n^2)$ corrupted symbols. The total number of symbols communicated in the scheme includes also the $O(mn)$ symbols required for constructing the BFS tree, implying that our scheme is resilient to a fraction of symbol corruption equal to
\[
\frac{\Theta(1/n) \cdot CC(\pi) \cdot O(n^2) }{ O(nm) + CC(\pi)\cdot O(n^2) }.
\]
This is asymptotically equal to an $O(1/n)$~fraction of noise when $\CC(\pi)>n$,  $\CC(\pi)\to \infty$.
\end{proof}

Lemmas~\ref{lemma:coding-communication} and~\ref{lemma:coding-correct} directly give our main theorem for this section.
\begin{theorem-repeat}{thm:coding}
\ThmCoding
\end{theorem-repeat}

\section{A Spanner-Based Distributed Interactive Coding Scheme}
\label{sec:coding-spanner}

In this section we slightly improve the overhead obtained by the coding scheme of Theorem~\ref{thm:coding}. We demonstrate a family of coding schemes with an interesting tradeoff between their overhead and resilience. The key ingredient is replacing the underlying infrastructure of the BFS tree~$\T$ with a sparse spanning graph~$S$, where we can trade off the sparseness of the graph (i.e., the number of edges it contains, and as a consequence, the resilience of the obtained coding scheme) 
with its distance distortion (i.e., the maximal distance in $S$ for any neighboring nodes in~$G$, and as a consequence, the added overhead for routing messages through~$S$ in the coding scheme).

Assume $u$ sends $v$ a message in the input protocol~$\pi$.
The coding scheme of Algorithm~\ref{alg:coding} routes every such message via the BFS tree~$\T$. This incurs a delay in~$\Pi'$, which can be of $O(n)$ rounds: in the worst case, $u$ and $v$ which are neighbors in~$G$ may now be two leaves of~$\T$ whose distance is $n$. In fact, even if their distance in~$\T$ is smaller, the coding scheme is not aware of this fact and must propagate the message to the entire network. The only guarantee we have in this case is that the message reaches its destination after at most $n$ rounds (of the underlying \FUS protocol).

In this section we suggest a way to reduce the delay factor of $n$ by routing messages over \emph{a spanner} rather than over the tree~$\T$. 
\begin{definition}[$t$-Spanner]
A subgraph $S=(V,E_S)$ is a $t$-spanner of $G=(V,E)$ if 
for every $(u,v)\in E$ it holds that
$dist(u,v)\le t$ in~$S$.
\end{definition}
Replacing the BFS tree~$\T$ with a $t$-spanner that has $s=|E_S|$ edges ensures that a message reaches its destination after at most $t$ steps (instead of~$n$). Since the noise resilience is determined by the number of edges used by the underlying \FUS protocol, by Lemma~\ref{lem:HS}, 
we obtain a resilience of~$\Theta(1/s)$.
The main result of this section is the following.
\newcommand{\ThmCodingSpanner}
{
Let $\pi_{spanner}$ be an asynchronous  distributed algorithm for constructing a $t$-spanner $S$ with $s$ edges in a noiseless setting.
Any asynchronous protocol $\pi$ over a network~$G$ with $\CC(\pi) \gg \CC(\pi_{spanner})$  can be simulated by a noise-resilient asynchronous protocol~$\Pi$ resilient to an $\Theta(1/s)$-fraction of message corruption and it holds that  $\CC(\Pi) =   \CC(\pi) \cdot O(st\log n)$.
}
\begin{theorem}
\label{thm:coding-spanner}
\ThmCodingSpanner
\end{theorem}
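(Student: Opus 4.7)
The plan is to mirror the architecture of Algorithm~\ref{alg:coding}, replacing the BFS tree $\T$ with the spanner $S$ and inserting a contention-resolution layer to handle flooding over a graph with cycles. I would structure the construction in three stages. Stage~1 builds $S$ reliably by invoking the coding scheme of Theorem~\ref{thm:coding} on $\pi_{spanner}$; after completion every node knows its incident edges in $S$. This stage costs $O(nm\log n) + \CC(\pi_{spanner}) \cdot O(n^2\log n)$ bits, which is absorbed into the final bound by the hypothesis $\CC(\pi) \gg \CC(\pi_{spanner})$. Since the total symbol count will be dominated by Stage~3 below, a noise adversary restricted to a $\Theta(1/s)$-fraction of all symbols can corrupt only an asymptotically smaller fraction of Stage~1's symbols, well within the $\Theta(1/n)$ resilience of Theorem~\ref{thm:coding}.

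Stage~2 builds a \FUS protocol $\pi''$ over $S$ that simulates $\pi$ by flooding, organized into synchronous \emph{windows} of $\Theta(t)$ rounds. Because $S$ has cycles, naively flooding many messages would create super-polynomial delays from congestion and out-of-order deliveries; I would therefore assign each in-flight message a priority derived from its $(src,dest)$ pair and window identifier, and let each node forward only the highest-priority message it has seen in the current window, padding $\bot$ on the remaining outgoing edges to preserve the \FUS invariant. Since $\mathrm{dist}_S(u,v) \le t$ for every $(u,v) \in E$, the contention-winner reaches its destination within the window. A paired acknowledgment sub-window of $\Theta(t)$ rounds reports back to each source whether its message was delivered; losers retransmit in the next window. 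A scheduling argument then shows that across all sources the total number of windows required is $O(\CC(\pi))$, each window consuming $\Theta(t)$ rounds and $\Theta(s)$ symbols of $O(\log n)$ bits per round (the $\log n$ covering the payload bit plus the source and destination IDs), for a total of $\CC(\pi'') = \CC(\pi) \cdot O(st\log n)$.

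Stage~3 composes $\pi''$ with Hoza-Schulman (Lemma~\ref{lem:HS}) on topology $S$ to obtain a \FUS protocol $\Pi''$ with $\Theta(1/s)$ resilience and constant blow-up, and executes $\Pi''$ via the root-triggered synchronizer of Algorithm~\ref{alg:FU-sim-spanner}, which is already written for arbitrary graphs and by Corollary~\ref{cor:noisy-sync} preserves message counts. Summing the three stages and invoking $\CC(\pi) \gg \CC(\pi_{spanner})$ yields $\CC(\Pi) = \CC(\pi) \cdot O(st\log n)$, and since corruption only affects symbols of $\Pi''$, and those symbols dominate the total count, the tolerable noise fraction is $\Theta(1/s)$. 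The main obstacle I expect is the Stage~2 contention-resolution analysis: one must formally show that the priority scheme ensures progress (every pending message eventually becomes the current global maximum and wins a window), that the acknowledgment loop can be realized entirely within the \FUS envelope so that Lemma~\ref{lem:HS} still applies verbatim, and that the per-source window bound does not hide an extra $\log n$ factor that would spoil the overhead target.
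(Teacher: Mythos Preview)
Your architecture matches the paper's (Algorithm~\ref{alg:coding-spanner}): build the spanner via Theorem~\ref{thm:coding}, convert $\pi$ to a windowed \FUS protocol over $S$, apply Hoza--Schulman, then the synchronizer. However, your resilience argument for Stage~1 contains a genuine error.

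You write that because Stage~3 dominates the symbol count, an adversary with a $\Theta(1/s)$ budget ``can corrupt only an asymptotically smaller fraction of Stage~1's symbols.'' The direction is backwards. The adversary's corruption budget is $\Theta(1/s)$ of the \emph{total} communication, i.e., roughly $\CC(\pi)\cdot\Theta(t)$ symbols. Stage~1, left unpadded, has only $O(nm)+\CC(\pi_{spanner})\cdot O(n^2)$ symbols; under the hypothesis $\CC(\pi)\gg\CC(\pi_{spanner})$ the budget eventually exceeds the entire size of Stage~1, so the adversary can concentrate all its corruptions there and destroy \emph{every} symbol of the spanner construction. The paper handles this explicitly by \emph{balancing}: Stage~1 is artificially extended (by padding $\pi_{spanner}$ or running extra Hoza--Schulman rounds) so that its message count equals $C_2$, the message count of the coded simulation of $\pi$. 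With $C_1=C_2$, a $\Theta(1/s)$-fraction of the total is at most a $\Theta(1/s)\le\Theta(1/n)$-fraction of Stage~1, which lies within the resilience of Theorem~\ref{thm:coding}. Without this padding step your proof does not go through.

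A smaller point on Stage~2: you propose an explicit acknowledgment sub-window flooded from each destination back to its source. The paper avoids this (and the contention it would create among multiple simultaneous acks) with a passive test: a sender sets a retransmit flag whenever it \emph{receives} any message of higher priority during the current $2t$-round window, and Lemma~\ref{lem:failureIndication} shows this has no false positives. This keeps the per-bit cost at exactly one window; you should verify that your explicit-ack variant does not reintroduce the very congestion problem the priority mechanism was designed to eliminate.
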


Specifically, due to the existence of $O(\log n)$-spanner with $O(n)$ edges~\cite{awerbuch85,PS89} (see also~\cite[Section~16]{peleg00}), we can let $\pi_{spanner}$ be a distributed construction of a spanner with the same parameters~\cite{DMZ10} and obtain the following corollary.
\newcommand{\CorCodingSpanner}
{
Let $\pi_{spanner}$ be an asynchronous distributed algorithm for constructing a $\log{n}$-spanner with $O(n)$ edges in a noiseless setting.
Any asynchronous protocol $\pi$ over a network~$G$ with $\CC(\pi) \gg \CC(\pi_{spanner})$ can be simulated by a noise-resilient asynchronous protocol~$\Pi$  resilient to an $\Theta(1/n)$-fraction of message corruption and it holds that $\CC(\Pi) =  \CC(\pi) \cdot O(n\log^2 n)$.
}
\begin{corollary}
\label{cor:coding-spanner}
\CorCodingSpanner
\end{corollary}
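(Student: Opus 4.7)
The plan is to derive Corollary~\ref{cor:coding-spanner} as a direct instantiation of Theorem~\ref{thm:coding-spanner} with a specific choice of spanner parameters. First, I would invoke the classical result of Peleg and Sch\"affer~\cite{PS89} (see also~\cite[Section~16]{peleg00}) stating that every graph on $n$ nodes admits an $O(\log n)$-spanner with $O(n)$ edges. Second, I would cite the distributed construction of Derbel, Mosbah and Zemmari~\cite{DMZ10}, which produces such a spanner in the asynchronous noiseless setting. I would fix $\pi_{spanner}$ to be exactly this distributed algorithm, so that the spanner it outputs satisfies $s := |E_S| = O(n)$ and stretch $t = O(\log n)$.

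Next, I would apply Theorem~\ref{thm:coding-spanner} to the given protocol $\pi$ together with the $\pi_{spanner}$ just chosen. Plugging $s = O(n)$ and $t = O(\log n)$ directly into the theorem statement, the resilience becomes $\Theta(1/s) = \Theta(1/n)$ and the communication complexity becomes
\[
\CC(\Pi) \;=\; \CC(\pi)\cdot O(s\, t\,\log n) \;=\; \CC(\pi)\cdot O(n\log^2 n),
\]
which matches the statement of the corollary. The hypothesis $\CC(\pi) \gg \CC(\pi_{spanner})$ is carried over directly from the theorem's assumption, and it requires no additional verification beyond observing that the spanner construction of~\cite{DMZ10} has communication complexity polynomial in $n$.

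There is no substantive obstacle here, since all of the heavy lifting has already been done inside Theorem~\ref{thm:coding-spanner}. The only point I would double-check is that $\pi_{spanner}$ must itself be executed over the noisy network, so it is internally simulated by the tree-based coding scheme of Theorem~\ref{thm:coding} (after the content-oblivious BFS of Section~\ref{sec:BFS} produces a spanning tree on which to run it). That bootstrap contributes an additive cost of the form $O(nm\log n) + \CC(\pi_{spanner})\cdot O(n^2\log n)$ to $\CC(\Pi)$; the assumption $\CC(\pi) \gg \CC(\pi_{spanner})$ guarantees that this additive cost is dominated by the leading $\CC(\pi)\cdot O(n\log^2 n)$ term, so the asymptotic overhead claimed by the corollary is preserved.
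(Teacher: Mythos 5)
Your proposal is correct and follows exactly the paper's route: the corollary is obtained by instantiating Theorem~\ref{thm:coding-spanner} with $\pi_{spanner}$ taken to be the distributed construction of~\cite{DMZ10} of an $O(\log n)$-spanner with $O(n)$ edges (whose existence is guaranteed by~\cite{PS89}), giving $s=O(n)$, $t=O(\log n)$, hence resilience $\Theta(1/n)$ and overhead $O(n\log^2 n)$. The handling of the spanner-construction cost (its noisy simulation via Theorem~\ref{thm:coding} and the assumption $\CC(\pi)\gg\CC(\pi_{spanner})$) is already internal to the theorem's proof, so your extra check is harmless and the instantiation stands.
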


There are several challenges that arise when we replace the BFS tree~$\T$ with a $t$-spanner~$S$. 
As in the case of a tree~$\T$, the nodes know neither the topology of the graph nor the shortest route between any two nodes, and the only way to propagate information to its destination is by flooding it through the network.

One difficulty stems from the fact that multiple paths may exist between any two nodes in~$S$, while only a single path exists in~$\T$. This means that we are no longer guaranteed that each message arrives to its destination only \emph{once} and that consecutive messages arrive \emph{in the correct order}. 
One possible way to overcome the above issue is to
add a serial counter to each message.
Unfortunately, a counter of $O(\log n)$ bits is not enough to avoid confusion. Consider for instance the case where 
$u$ and $v$ are at distance $t=O( \log n)$ of each other\footnote{E.g., when using an $O(\log n)$-spanner of size $s=O(n)$ towards Corollary~\ref{cor:coding-spanner}.}, yet each node in between them is connected to $\sqrt{n}$ (unique) nodes. Furthermore, assume that all nodes are currently sending messages. 
If we  flood the information through the network similar to the case of the tree~$\T$, i.e., where each node holds a queue of incoming messages which it relays one by one, then the message between $u$ and $v$ going through that route may be delayed by $(\sqrt{n})^t=n^{\Theta(\log n)}$. Such a large delay cannot be recorded by a counter of $O(\log n)$~bits. Therefore, if $u$ and $v$ are also connected via another \emph{short} path (or alternatively, a path that is not congested), then messages from the long path may be confused with messages from the short path so that the correct order of the messages could not be retrieved. 
To bypass the congestion issue, our coding scheme uses a priority mechanism for contention resolution, where messages with low priority are dropped by  congested nodes and are later resent by their originating node. A full description and a careful proof that no messages are dropped are given in Section~\ref{sec:fus-spanner}.

\smallskip

The remaining issue is how to construct the spanner despite the noisy communication. 
Luckily, this can be done using Theorem~\ref{thm:coding}, i.e., by running Algorithm~\ref{alg:coding} with an input distributed protocol~$\pi_{spanner}$ for constructing spanners. For example, we can take a deterministic synchronous\footnote{In Theorem~\ref{thm:coding-spanner} we assume $\pi_{spanner}$ is an asynchronous protocol. Note that any synchronous protocol with time complexity~$\tau$ sending  messages of size $\sigma$ can be thought of as an asynchronous protocol with message complexity $O(\tau m)$ and communication complexity $O(\tau m \sigma)$. 
}
protocol for constructing a $O(\log n)$-spanner of size $O(n)$, e.g., the construction of 
Derbel,  Mosbah, and   Zemmari~\cite{DMZ10} which 
sends messages of size~$O(\log n)$ and takes $O(n)$~rounds to complete. 
On this input, Algorithm~\ref{alg:coding}  communicates 
$O(m n^2 )$ messages in the worst case (see also Remark~\ref{rem:SendBlocks}).

Once we obtain a $t$-spanner, we proceed as in Section~\ref{sec:coding} by converting the input protocol~$\pi$ to a \FUS protocol over the spanner using Algorithm~\ref{alg:fus}, coding the \FUS protocol using the Hoza-Schulman scheme, and simulating each round of the coded algorithm using the root-triggered synchronizer over the spanner via Algorithm~\ref{alg:FU-sim-spanner}. 

Another issue which requires a careful attention
is that when we execute two coded algorithms one after the other 
we must make sure that the lengths of both parts are balanced. This happens, for example, when we first run the algorithm for obtaining the spanner~$S$ and then execute the algorithm for simulating~$\pi$ over~$S$. The reason for this is that otherwise the strong adversary can choose to attack the shorter algorithm using the larger budget of errors it has due to the longer algorithm.
For example, assume that the coded construction of the spanner communicates $x$~symbols  while the coded simulation of~$\pi$ sends $y$~symbols. If $x$ is at most~$O(y/s)$, then the adversary has enough budget to fully corrupt the first part. Requiring, say, $x=y$, makes the scheme resilient to a $\Theta(1/2s)=\Theta(1/s)$ fraction of corrupted symbols (see also footnote~\ref{footnote:balance}).

\subsection{Obtaining a spanner-based \FUS protocol }
\label{sec:fus-spanner}

A main difference between the coding scheme given in Section~\ref{sec:coding} and in the spanner-based coding scheme of this section %
is how they transform the input protocol~$\pi$ into a \FUS one. The underlying idea is similar: each node breaks the messages of~$\pi$ bits, encapsulates them in a symbol that contains also their source and destination, and floods them one by one through the underlying graph---a tree in the former case and a spanner in the latter. 

As mentioned above, the main difficulty in the case of a spanner is the existence of several paths between every two nodes, which may cause a specific symbol to arrive multiple times at the receiver. Moreover, simply flooding the information through the spanner may cause very large delays on certain paths (super-polynomial delays\footnote{Hence, appending $O(\log n)$ bits of a counter to each message does not suffice in order to guarantee the correct ordering of messages at the receiver's side.}), due to the congestion caused by other symbols that are being relayed through the network. Hence, a receiver that gets multiple copies of symbols out of order with such long delays may not be able to reconstruct $\pi$'s message from them.

Our solution breaks the simulation into windows of $2t$ rounds each, where each node attempts to send only a single symbol at each such window. In order to avoid congestion we use a priority system: when two or more symbols are received by some node, it drops all the symbols except for the one with the highest priority---the one whose sender's identity is maximal. Then, the node relays this symbol to all of its neighbors.
This procedure guarantees that at least one symbol arrives at its destination, at every $2t$-round window. Furthermore, senders whose symbols were dropped can learn this event and resend their message during the next window. If a sender receives a symbol with a higher priority, it assumes that its own symbol was dropped during that window. We prove that this mechanism may have false negatives (i.e., a sender that resends a symbol while that symbol did arrive at the destination), but it does not have false positives (if a sender does not get an indication that it needs to resend, then its symbol is guaranteed to have arrived). 

The detailed transformation into a \FUS protocol is given in 
Algorithm~\ref{alg:fus-spanner}.
\begin{algorithm}[htp]
\caption{Simulation of an asynchronous protocol~$\pi$
by a \FUS protocol~$\pi'$ using a $t$-spanner $S$.}
\label{alg:fus-spanner}
\begin{algorithmic}[1]

\algblock[Name]{Begin}{End}   %

\small

\Statex
\textbf{Input:} 
An asynchronous protocol $\pi$ over $G$;
a $t$-spanner $S$ of~$G$ (the value $t$ is known to all nodes)
\Statex

\Statex \textbf{Init:} for all nodes $u$: $\textsf{RepeatSendMsg}_u \gets \textit{false}$; $\textsf{NextMsg}_u,\textsf{RelayMemory}_u,\textsf{queue}_u \gets\emptyset$.

\Statex

\State In every round~$\mathsf{RND}$, for every node $u$:
\Begin
  \If {$\mathsf{RND} \equiv 1 \mod 2t$}
  	\State Invoke  \Call{SetNextMessage}{}
  \EndIf
  	
	\State Invoke  \Call{Relay}{}
\End

\Statex

\Procedure{SetNextMessage}{} (for node $u$)
\If {$\textsf{RepeatSendMsg}_u = \textit{false}$ or $\textsf{NextMsg}_u = \emptyset$} 
	\Statex \Comment{{Split the next message of $\pi$ into bits, each to be delivered separately}}
	\State Let $M_1\cdots M_\ell$ be the bit representation of a message $M$ that $u$ has to send to $v$ in $\pi$.
	\State Push $(u,v,M_1,1 \mod n),\cdots ,(u,v,M_\ell,\ell \mod n)$ into $\textsf{queue}_u$
		\Comment{ignore if no message to send}
	\State $\textsf{NextMsg}_u \gets \mathsf{pop}(\textsf{queue}_u)$  	
	\State $\textsf{RelayMemory}_u\gets \textsf{NextMsg}_u$
	\State $\textsf{RepeatSendMsg}_u \gets \textit{false}$
\ElsIf  {$\textsf{RepeatSendMsg}_u = \textit{true}$} 
	\Comment{ \smash{\parbox[t]{0.4\columnwidth}{Previous bit may have not reached the destination; retry with the same $\textsf{NextMsg}$.}}}

	\State $\textsf{RelayMemory}_u\gets \textsf{NextMsg}_u$
	\State $\textsf{RepeatSendMsg}_u \gets \textit{false}$
\EndIf
\EndProcedure

\Statex

\Procedure{Relay}{} (for node $u$)
	\State $(src',dest',val',i',w') \gets \textsf{RelayMemory}_u$

	\Comment{Relay highest priority message to all neighbors}

	\If {$\textsf{RelayMemory}_u \ne \emptyset$}
		\State send $(src',dest',val',i')$ to every $v \in \neigh_u(S) \setminus \{w'\}$ and send $\bot$ to~$w'$
	\Else 
		\State send $\bot$ to every $v \in \neigh_u(S)$
	\EndIf
	\State $\textsf{RelayMemory}_u \gets \emptyset$
	
	\Statex 
	\Statex
	\Comment{Receive messages from all neighbors, keep only the message with highest priority }
	\State Upon receiving a message $(src,dest,val,i)$ from $w$:
\Begin
	
	\If {$src' > u$}  
	\Comment{ \smash{\parbox[t]{0.4\columnwidth}{Whenever a message with higher priority is received,  assume own message wasn't delivered during this window of $2t$ rounds}}  }
		\State $\textsf{RepeatSendMsg}_u \gets true$  \label{alg:spanner:repeatFlag}
	\EndIf
	
	\Statex %
	\If {$dest\ne u$}
			\If  {$src > src'$ or $\textit{RelayMemory}_u=\emptyset$}
				\State $\textsf{RelayMemory}_u\gets (src,dest,val,i,w)$
			\EndIf
	\ElsIf {$dest=u$}
		\State 
		\parbox[t]{0.8\linewidth}
		{%
		collect the bits $val$ for decoding $M$\\ %
		(using the index $i$ to ignore already received bits)%
		}
	\EndIf
\End
\EndProcedure
\end{algorithmic}
\thisfloatpagestyle{empty} %
\end{algorithm}

\bigskip

\goodbreak

Towards proving the correctness of Algorithm~\ref{alg:fus-spanner},
we prove some properties of the algorithm.
\begin{lemma} \label{lem:messageProgress} 
During every non-overlapping window of $2t$ rounds of $\pi$ at least one message $(src,dest,val,i)$ is delivered to its destination.
Furthermore, The sender of this message ends this window of $2t$ rounds with $\textsf{RepeatSendMsg}_u = \textit{false}$.
\end{lemma}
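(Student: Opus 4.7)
The plan is to identify, for any given $2t$-round window, a specific sender whose message is guaranteed to be delivered --- namely the sender with the largest identity among those currently attempting to transmit --- and then argue that the priority scheme lets this sender's bit sweep across the spanner without interference. Fix a window $W = [\tau+1, \tau+2t]$ with $\tau \equiv 0 \pmod{2t}$, and let $S^\star$ denote the set of nodes $u$ with $\textsf{NextMsg}_u \ne \emptyset$ immediately after the \textsc{SetNextMessage} call in round $\tau+1$. If $S^\star = \emptyset$ the lemma is vacuous; otherwise let $u^\star := \max S^\star$ (identities are unique) and let $v^\star$ be the destination of $u^\star$'s current message. Because $\pi$ only sends between $G$-neighbors, $(u^\star,v^\star) \in E$, and since $S$ is a $t$-spanner of $G$ the $S$-distance $d$ from $u^\star$ to $v^\star$ satisfies $d \le t$.

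The core of the proof is an inductive wave-propagation claim: for every $k = 0,1,\dots,t$, at the end of round $\tau+k$ every node $w$ at $S$-distance exactly $k$ from $u^\star$ has $u^\star$'s message stored in $\textsf{RelayMemory}_w$. The base case $k=0$ is immediate, as \textsc{SetNextMessage} places $u^\star$'s message into $\textsf{RelayMemory}_{u^\star}$ before \textsc{Relay} is invoked. For the inductive step, any node $w$ at distance $k+1$ has an $S$-neighbor at distance $k$ which, by the inductive hypothesis, transmits $u^\star$'s message to $w$ during round $\tau+k+1$. The priority rule in the receive branch of \textsc{Relay} keeps the incoming message with the highest source identity; since $u^\star$ has the maximum identity in $S^\star$, every message circulating inside $W$ is either a fresh emission by some $u \in S^\star$ or a relay thereof, and no competing message can have source greater than $u^\star$. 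Hence $\textsf{RelayMemory}_w$ is overwritten with (or initialized to) $u^\star$'s message, completing the induction. Instantiating $k = d \le t$, the destination $v^\star$ receives and decodes the bit no later than round $\tau+t$, which lies well within $W$.

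For the second claim, note that \textsc{SetNextMessage} sets $\textsf{RepeatSendMsg}_u$ to $\mathit{false}$ at the start of every window, so $\textsf{RepeatSendMsg}_{u^\star} = \mathit{false}$ entering $W$; the only way for this to change is via Line~\ref{alg:spanner:repeatFlag}, whose guard requires detecting a message whose source identity exceeds $u^\star$'s own. By the maximality of $u^\star$ in $S^\star$, no such message exists anywhere in the network during $W$, so the guard is never satisfied at $u^\star$ and the flag remains $\mathit{false}$ throughout the window. The main subtlety I expect to handle carefully is ruling out interference from messages left in $\textsf{RelayMemory}$ at the seams between windows and from lower-priority waves racing through the spanner; both issues are neutralized by the observation that $\textsf{RelayMemory}$ is reset to $\emptyset$ at the end of every round, so each round's forwarding decision is determined solely by the traffic received in that round, which --- by the priority domination of $u^\star$ --- can never dislodge the inductive wave carrying $u^\star$'s bit toward $v^\star$.
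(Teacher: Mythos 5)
Your proof is correct and takes essentially the same route as the paper's: the highest-identity sender's message dominates the priority rule, sweeps the spanner to its destination within $t\le 2t$ rounds, and that sender never receives a higher-priority message, so its $\textsf{RepeatSendMsg}$ flag stays \textit{false}; your explicit distance induction just spells out what the paper states in one line. (One small imprecision: stale messages at window seams are purged because \textsc{SetNextMessage} overwrites $\textsf{RelayMemory}$ at the start of each window, not because $\textsf{RelayMemory}$ is empty ``at the end of every round''---but this does not affect the argument.)
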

\begin{proof}
Since \Relay is performed according to priority, the message with the highest priority (highest $src$) always survives and gets relayed all the way to the destination (which takes at most $t$ rounds). The sender of this message never receives a message with a higher priority, thus it never sets $\textsf{RepeatSendMsg}_u$ to $\textit{true}$ on Line~\ref{alg:spanner:repeatFlag}.
\end{proof}

\begin{lemma} \label{lem:failureIndication}
If $u$ sends a message and after $2t$ rounds it holds that $\textsf{RepeatSendMsg}_u = \textit{false}$, then the message has reached its destination.
\end{lemma}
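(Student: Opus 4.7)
The plan is to prove the contrapositive: if $u$'s message is not delivered to its destination within the $2t$-round window, then at some round $u$ receives a message with source identifier strictly greater than $u$, which sets $\textsf{RepeatSendMsg}_u \gets \textit{true}$ at Line~\ref{alg:spanner:repeatFlag}. The key tool is a clean accounting of the max-priority propagation. Let $A$ be the set of nodes that inject a message at round~$1$ of the window, and let $s_r(x)$ denote the source identifier stored in $x$'s $\textsf{RelayMemory}$ at the end of round~$r$ (with $s_r(x)=-\infty$ if the memory is empty). Because \Relay clears $\textsf{RelayMemory}$ every round and then overwrites it with the incoming message having the maximum $src$ field, a short induction on $r$ yields $s_0(x)=x$ for $x\in A$ and $-\infty$ otherwise, and $s_r(x)=\max_{y\in \neigh_x(S)} s_{r-1}(y)$ for $r\geq 1$. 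Unfolding, $s_r(x)$ is the largest $z\in A$ admitting a walk of length exactly $r$ in $S$ from $z$ to $x$.

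With this in place, let $v$ be $u$'s intended recipient and set $\ell \coloneqq dist_S(u,v)$; since $(u,v)\in E(G)$ and $S$ is a $t$-spanner, $\ell\leq t$. Fix a shortest path $u=p_0,p_1,\dots,p_\ell=v$ in $S$. If $\ell=1$, the message is sent directly to $v$ and collected in round~$1$, so assume $\ell\geq 2$. Non-delivery implies that for every round $r\leq 2t$ and every $y\in \neigh_v(S)$, $s_{r-1}(y)\neq u$; specializing to $r=\ell$ and $y=p_{\ell-1}$, together with the length-$(\ell-1)$ walk from $u$ along the shortest path, forces $s_{\ell-1}(p_{\ell-1})>u$. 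Hence there exists $z\in A$ with $z>u$ admitting a length-$(\ell-1)$ walk in $S$ to $p_{\ell-1}$. Concatenating this walk with the reverse sub-path $p_{\ell-1},p_{\ell-2},\dots,p_1$ (of length $\ell-2$) yields a length-$(2\ell-3)$ walk from $z$ to $p_1\in \neigh_u(S)$, so $s_{2\ell-3}(p_1)\geq z>u$. Therefore in round $2\ell-2\leq 2t-2$, node $u$ receives from $p_1$ a message with source strictly exceeding $u$, triggering $\textsf{RepeatSendMsg}_u\gets \textit{true}$---contradicting the hypothesis.

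The main obstacle is establishing the recursion itself. Care is needed because $\textsf{RelayMemory}$ is cleared every round, so $u$'s own signal is injected in round~$1$ and then disappears from $u$'s own memory; this means I cannot simply ``trace'' the message forward from $u$ round by round using $u$'s state, and the analysis must express the state at every node purely in terms of spanner walks. In addition, one must verify that $\bot$-sends from neighbors with no pending message do not perturb the max, and that messages addressed to the receiver are collected rather than stored in $\textsf{RelayMemory}$ (so they are correctly excluded from the state update). Once the recursion is justified, the remainder of the argument is routine walk-counting along the chosen shortest path.
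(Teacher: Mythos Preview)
Your recursion $s_r(x)=\max_{y\in\neigh_x(S)} s_{r-1}(y)$ overlooks a feature of \Relay that is essential here: a node does \emph{not} forward its current $\textsf{RelayMemory}$ back to the neighbor~$w'$ it received it from (it sends $\bot$ to~$w'$). Hence $x$ may receive $\bot$ from a neighbor $y$ even though $s_{r-1}(y)$ is large, and the clean ``max over all neighbors'' rule fails. As a concrete sanity check, on the path $1\!-\!2\!-\!3\!-\!4\!-\!5$ with $A=\{3\}$, your formula gives $s_2(3)=3$ (via the walk $3\!-\!2\!-\!3$), whereas in the actual execution $s_2(3)=-\infty$ because both $2$ and $4$ send $\bot$ back to~$3$. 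Consequently your walk characterization ``$s_r(x)$ is the largest $z\in A$ with a length-$r$ walk from $z$ to $x$'' is false in the direction you need: the existence of a walk does \emph{not} yield the lower bound $s_r(x)\ge z$.

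This breaks both key steps. In step~2 you combine ``non-delivery $\Rightarrow s_{\ell-1}(p_{\ell-1})\neq u$'' with ``the shortest-path walk $\Rightarrow s_{\ell-1}(p_{\ell-1})\ge u$''; the second implication need not hold, e.g.\ if some $p_j$'s maximum in round~$j$ arrives \emph{from} $p_{j+1}$, then $p_j$ sends $\bot$ to $p_{j+1}$ and the forward chain is severed. In step~4 the concatenated walk $z\to\cdots\to p_{\ell-1}\to p_{\ell-2}\to\cdots\to p_1$ may immediately backtrack (if $z$'s message reached $p_{\ell-1}$ via $p_{\ell-2}$), and backtracking walks are precisely the ones killed by the $\bot$-to-$w'$ rule; so $s_{2\ell-3}(p_1)\ge z$ does not follow. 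You flag $\bot$-sends only for the ``empty memory'' case, but the no-send-back $\bot$ is the one that matters. The paper sidesteps walk bookkeeping entirely and argues by a distance case split: if every higher-priority sender~$w$ has $dist_S(u,w)>2t$, then within the radius-$t$ ball around $u$ the message from $u$ is top priority during the first $t$ rounds and reaches $v$; otherwise some higher-priority signal reaches $u$ within $2t$ rounds. A fix along your lines is possible, but it requires tracking actual propagation (respecting the $w'$-exclusion), not arbitrary walks.
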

\begin{proof}
Assume $u$ sends a message to some node~$v$. Additionally, assume that a node~$w$ with a higher priority ($w>u$) also attempts to send a message during the same $2t$-round window. Without loss of generality, we assume $w$ is the only node with priority higher than~$u$; also recall that nodes with lower priority have no effect on the delivery of $u$'s message. 

For $k\le2t$, define $B_{k}$ to  be all the messages that travel during the first $k$ rounds (of that specific $2t$-round window) on all the edges of distance at most $k$ from~$u$.
We consider several cases:
\begin{enumerate}
\item
If $u$'s message has the highest priority in $B_{2t}$ then it is clear that the statement holds. This happens whenever $dist(u,w)>4t$. 

\item
If $dist(u,w) \le 2t$ then $u$ receives $w$'s message during that window, and sets $\textsf{RepeatSendMsg}_u$ to~$true$, so in this case the required conditions of the lemma do not hold to begin with.

\item
The last case is when $B_{2t}$ contains $w$'s message, yet $dist(u,w) > 2t$. This implies that $dist(v,w)>t$ (otherwise, via the trianlge inequality, we have that $dist(u,w)\le 2t$). This in turn implies that $u$'s message is the one with the highest priority in $B_{t}$,
which means that $u$'s message is delivered to $v$ by round~$t$.
\end{enumerate}
\end{proof}

Note that the opposite direction of the statement of Lemma~\ref{lem:failureIndication} does not always hold. 
That is, if $u$ sends a message and after $2t$ rounds it holds that $\textsf{RepeatSendMsg}_u = \textit{true}$, then it is possible that the message has nevertheless reached its destination. 

Lemma~\ref{lem:messageProgress} and Lemma~\ref{lem:failureIndication} suggest that progress is made every $2t$ rounds: at least one message is being delivered and all nodes whose messages are not delivered receive an indication of this event (and retry during the next window). This leads to the correctness of the algorithm, as stated in the following lemma.

\begin{lemma}
\label{lemma:fus-spanner}
Algorithm~\ref{alg:fus-spanner} creates a fully-utilized synchronous protocol $\pi'$ that simulates $\pi$, in the sense that all messages of $\pi$ are sent and received, with a communication overhead of $O(st \log{n})$, where $s$ is the number of edges in the input $t$-spanner $S$.
\end{lemma}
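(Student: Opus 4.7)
The plan is to verify three things: (i) $\pi'$ is a \FUS protocol on $S$; (ii) $\pi'$ correctly simulates $\pi$, i.e., every message of $\pi$ is eventually delivered and its bits are correctly reassembled at the destination; and (iii) the multiplicative communication overhead is $O(st\log n)$.

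Property (i) is immediate from inspection of \Relay: in every round each node sends exactly one symbol on every $S$-edge, either the record stored in $\textsf{RelayMemory}$ or the placeholder $\bot$.

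For (ii), I would argue via a potential-function argument whose potential counts the number of bits of $\pi$ that have not yet been delivered. Fix any $2t$-round window that still contains an undelivered bit. By Lemma~\ref{lem:messageProgress}, at least one active bit reaches its destination during that window, and its sender ends the window with $\textsf{RepeatSendMsg}=\textit{false}$; at the next invocation of \textsc{SetNextMessage} it therefore pops the next bit out of $\textsf{queue}$, so the potential strictly decreases. Dually, Lemma~\ref{lem:failureIndication} shows that every sender whose bit did \emph{not} reach its destination finishes the window with $\textsf{RepeatSendMsg}=\textit{true}$, so at the next invocation of \textsc{SetNextMessage} it reloads the same $\textsf{NextMsg}$ into $\textsf{RelayMemory}$ and no bit is silently dropped. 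The index $i \bmod n$ attached to each bit lets the destination ignore duplicate arrivals via alternate paths of $S$ and reassemble the message in its original order, since each sender keeps at most one bit in flight at a time. Iterating these observations across consecutive windows shows that all $\CC(\pi)$ bits are eventually delivered, and hence $\pi'$ simulates $\pi$.

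Property (iii) is a straightforward accounting argument. By (ii), at least one bit of $\pi$ is delivered every $2t$ rounds, so delivering all $\CC(\pi)$ bits takes $O(\CC(\pi)\cdot t)$ rounds. In each round, $|E_S|=s$ symbols are transmitted along $S$, and each symbol is a tuple of the form $(src,dest,val,i)$ of size $O(\log n)$ bits. The total is $O(\CC(\pi)\cdot st\log n)$ bits, giving the claimed multiplicative overhead of $O(st\log n)$.

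The main obstacle is the correctness part of (ii): one has to rule out both false positives (a bit declared delivered that was actually lost, which Lemma~\ref{lem:failureIndication} directly forbids) and starvation of low-priority senders. Starvation is prevented by the observation that the current top-priority contender delivers its bit in its window and then either moves on to its next bit or has nothing left to send, so an induction on priority shows that every sender eventually succeeds. Getting this induction right, together with a careful statement of the invariant maintained by $\textsf{RepeatSendMsg}$ across windows, is where the bulk of the work sits.
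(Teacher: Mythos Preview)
Your proposal is correct and follows essentially the same approach as the paper: both verify the \FUS property by inspection of \Relay, both derive eventual delivery from Lemmas~\ref{lem:messageProgress} and~\ref{lem:failureIndication} (the paper simply asserts this implication, whereas you make the progress argument explicit via a potential and flag the starvation concern), and both account for the overhead by charging $O(t)$ rounds per bit of~$\pi$, $O(s)$ symbols per round, and $O(\log n)$ bits per symbol. Your discussion of duplicate handling via the index field also matches the paper's, which further notes that a counter modulo~$3$ would already suffice since a receiver can only see the current bit or a retransmission of the previous one from any fixed sender.
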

\begin{proof}
By construction, every node sends a message to all of its neighbors in~$S$ in each round and hence Algorithm~\ref{alg:fus-spanner} is a fully-utilized synchronous protocol. In addition, as implied by Lemma~\ref{lem:messageProgress} and Lemma~\ref{lem:failureIndication},
eventually every messages of $\pi$ reaches its destination and hence the obtained fully-utilized synchronous protocol simulates all messages in~$\pi$. 
Note that a message may be resent and received multiple times at the destination. However a node $u$ always receives from any specific node $v$ either the next bit of the message $v$ sends, or a re-transmission of the last bit of the same message. Therefore, adding a single parity bit of the index of the transmitted bit is enough in order to avoid confusion (In Algorithm~\ref{alg:FU-sim-spanner} we actually add to each bit the information $(i \mod n)$ where $i$ is the index of the bit; yet communicating the value $(i\mod 3)$ would have sufficed).

For the communication overhead, note that $O(\log n)$ bits of the identities of source and destination and bit index are appended to each bit sent by~$\pi$, that is $\log |\Sigma|=O(\log n)$ suffices. In addition, a delivery of a single bit of~$\pi$ may require $O(t)$~rounds when sent over the spanner~$S$. In each such round $O(s)$ symbols are sent in a fully-utilized synchronous protocol. This implies $O(st)$ messages are communicated per each bit of~$\pi$ which gives a total communication overhead of~$O(st \log n)$.

Note that this is a worst-case analysis that assumes a single bit (of~$\pi$) travels within the network at each time so that another bit is sent only after a previous bit reached its destination. If several bits are sent consecutively or if several nodes send bits simultaneously, the resulting number of messages can only decrease.
\end{proof}

\goodbreak

\subsection{The spanner-based coding scheme}
\label{sec:construction-coding-spanner}
Our modified spanner-based coding scheme~$\Pi$ is given in Algorithm~\ref{alg:coding-spanner}.

\begin{algorithm}[htp]
\caption{Spanner-based coding scheme~$\Pi$ for any asynchronous noiseless protocol~$\pi$.}
\label{alg:coding-spanner}
\begin{algorithmic}[1]

\Statex
\textbf{Initialization:}
All nodes know (a bound on) $\CC(\pi)$.
$\pi_{spanner}$ is an asynchronous (noiseless) protocol for constructing a $t$-spanner. 

\Statex

\State For node $r$ designated as initiator:
\algblock[Name]{Begin}{End}
\Begin
	
	\State \parbox[t]{0.9\columnwidth}{Execute Algorithm~\ref{alg:coding} on input $\pi_{spanner}$ 
with $r$ designated as root, extending the protocol so that the total message complexity of this step is~$C_2$ defined below.   
Let $S$ be the obtained $t$-spanner. %
}  \label{alg:CodingSpanner:BFSstep}

	\State Let $\pi'$ be a \FUS algorithm induced by $\pi$ using Algorithm~\ref{alg:fus-spanner} on $S$. \label{alg:CodingSpanner:FUSstep}
	\State \parbox[t]{0.9\columnwidth}{Let $\Pi'= \text{HS}(\pi')$ be the Hoza-Schulman coding scheme  for~$\pi'$.  %
	Let $C_2$ be the message complexity of~$\Pi'$.}
	\label{alg:CodingSpanner:HSstep}
	\State \parbox[t]{0.9\columnwidth}{Simulate $\Pi'$ using the synchronizer of Algorithm~\ref{alg:FU-sim-spanner} over $S$ with $r$ as the initiator. 
	} \label{alg:CodingSpanner:simulation}
\End
\Statex
\end{algorithmic}
\end{algorithm}

As mentioned above, when a coding scheme contains two parts 
(i.e., constructing the spanner and executing the coding scheme)
that are being coded independently, it is necessary to make sure that 
the two parts  are of equal length. 
Recall that we consider the asymptotical behavior of the coding scheme when $\CC(\pi)$ tends to infinity. Hence, we assume $CC(\pi)\gg CC(\pi_{spanner})$, which implies that we need to extend the first part of our scheme where we construct the spanner graph in a resilient way.
There are two possible ways 
to make this extension: either by artificially increasing the communication of $\pi_{spanner}$ to $\approx O(\CC(\pi)\cdot st / n^2)$ bits, say, by sending zeroes after the spanner's construction has completed; or by running the Hoza-Schulman encoding for $O(C_2/ n )$ rounds.

We claim that the coding scheme~$\Pi$ of Algorithm~\ref{alg:coding-spanner} satisfies the requirements of Theorem~\ref{thm:coding-spanner}.

\begin{proof} 
(\textbf{Theorem~\ref{thm:coding-spanner}})
We first analyze the communication complexity of the scheme. We recall the assumption stated in Remark~\ref{rem:fixedAlphabet} that symbol of the coding scheme contains $O(\log n)$ bits; thus we can equivalently bound the message complexity.
The execution of Algorithm~\ref{alg:coding} in Line~\ref{alg:CodingSpanner:BFSstep} has a message complexity of $C_1=O(mn)+\CC(\pi_{spanner})\cdot O(n^2)$. However, we artificially extend this step, so that it would have a message complexity of~$C_2$.
By Lemma~\ref{lemma:fus-spanner}, the message complexity of $\pi'$ given by Algorithm~\ref{alg:fus-spanner} (Line~\ref{alg:CodingSpanner:FUSstep}) is $C_2=\CC(\pi)\cdot O(st)$. Hence, the overall message complexity of Algorithm~\ref{alg:coding-spanner} is~$2C_2$, and the overall communication complexity is $O(C_2 \log n)=\CC(\pi) \cdot O(ts\log n)$.

Regarding the resilience of the scheme,
the first part (Line~\ref{alg:CodingSpanner:BFSstep}) is resilient due to Theorem~\ref{thm:coding};
the second part (Lines~\ref{alg:CodingSpanner:FUSstep}--\ref{alg:CodingSpanner:simulation})
 is independently encoded via the Hoza-Schulman coding scheme, whose guarantees are given in  Lemma~\ref{lem:HS}. 
The first part is resilient to a $\mu_1=\Theta(1/n)$ fraction of corrupted messages and the second part is resilient to a $\mu_2=\Theta(1/s)$ fraction of corrupted messages. Since we balance  the message complexity of the two parts so that $C_1=C_2$ and since $s=\Omega(n)$, the new scheme is resilient to a fraction of $\mu_2/2 = \Theta(1/s)$ of corrupted messages overall.\footnote{\label{footnote:balance}By balancing the parts $C_1$ and $C_2$ in a weighted way  %
one can obtain a slightly improved resilience of $\mu_1\mu_2/(\mu_1+\mu_2)= \Theta(1/(n+s))$ which is, however, asymptotically equivalent to~$\Theta(1/s)$.}
\end{proof}

%\medskip

\paragraph{Acknowledgement.} We are grateful to Merav Parter for bringing~\cite{DMZ10} to our attention.

\bibliographystyle{alphabbrv-doi}
\bibliography{network}

\end{document}